 \newtheorem{theorem}{Theorem}
 \newtheorem{proposition}{Proposition}
 \newtheorem{lemma}[theorem]{Lemma}
 \newtheorem{definition}[theorem]{Definition}
\newcommand{\blue}[1]{{\color{blue}#1}}
\newcommand{\red}[1]{{\color{red}#1}}
\newcommand{\mc}[1]{\mathcal{#1}}
\newcommand{\mb}[1]{\mathbb{#1}}
\newcommand{\e}{\mathrm{e}}
\newcommand{\tr}{\mathrm{Tr}} 
\newcommand{\Tr}{\mathrm{Tr}} 
\newcommand{\one}{\mathbf{I}}
\newcommand{\RR}{\mb{R}}
\newcommand{\norm}[1]{\left\Vert #1 \right\Vert}
\newcommand{\ket}[1]{\left.\left|{#1}\right.\right\rangle}
\newcommand{\bra}[1]{\left.\left\langle{#1}\right.\right|}
\newcommand{\ketbra}[2]{\ket{#1} \!\! \bra{#2}}
  \newcommand{\proj}[1]{\ketbra{#1}{#1}}
\newcommand{\fu}{Dahlem Center for Complex Quantum Systems, Freie Universit{\"a}t Berlin, 14195 Berlin, Germany}
\newcommand{\ethz}{Institute for Theoretical Physics, ETH Zurich, 8093 Zurich, Switzerland}
\begin{document}

\title{By-passing fluctuation theorems with a catalyst}

\author{P.~Boes}
\affiliation{\fu}
\orcid{0000-0003-4932-6055}
\author{R.~Gallego}
\affiliation{\fu}
\author{N.~H.~Y.~Ng}
\affiliation{\fu}
\orcid{0000-0003-0007-4707}
\author{J.~Eisert} 
\affiliation{\fu}
\orcid{0000-0003-3033-1292}
\author{H.~Wilming}
\affiliation{\ethz}
\orcid{0000-0002-0306-7679}
\begin{abstract}	
	Fluctuation theorems impose constraints on possible work extraction probabilities in thermodynamical processes. These constraints are stronger than the usual second law, which is concerned only with average values. 
	Here, we show that such constraints, expressed in the form of the Jarzysnki equality, can be by-passed if one allows for the use of catalysts---additional degrees of freedom 
	that may become correlated with the system from which work is extracted, but whose reduced state remains unchanged so that they can be re-used. 
	This violation can be achieved both for small systems but also for macroscopic many-body systems, and leads to positive work extraction per particle with finite probability from macroscopic states in equilibrium. 
	In addition to studying such violations for a single system, we also discuss the scenario in which many parties use the same catalyst to induce local transitions. 
	We show that there exist catalytic processes that lead to highly correlated work distributions, 
	expected to have implications for stochastic and quantum thermodynamics.
\end{abstract}
\maketitle
\section{Introduction}
Consider a physical system in thermal equilibrium with its environment. The second law of thermodynamics dictates that it is impossible to extract positive average work from this system using reversible processes that are cyclic in the Hamiltonian. More precisely, if the system's initial state is represented by a canonical ensemble and we consider many iterations of a probabilistic process during which the Hamiltonian of the system is varied but returned to the initial Hamiltonian at the end, then it holds that 
\begin{equation} 
	\label{eq:second_law}     
	\langle W \rangle \leq 0, 
\end{equation}
where $\langle W \rangle$ is the average work extracted during the process. We will refer to \eqref{eq:second_law} as the \emph{Average Second Law (Av-SL)},

However, there exist significantly stronger constraints on the possible extracted work in the above type of processes, namely those imposed by \emph{fluctuation theorems} \cite{Jarzynski_1997,Crooks1998,Tasaki_2000}. 
Indeed, using such theorems, one can show that the probability of extracting a finite amount of positive work per particle is exponentially suppressed with the number of particles in a system \cite{Jarzynski_1997}. 
Once these different types of constraints are recognized, an interesting questions arises: What are physically meaningful settings in which the probabilistic constraints imposed by fluctuation theorems can be circumvented, while still respecting the Av-SL? In particular, do fluctuation theorems also hold when an additional, cyclically evolving auxiliary system is allowed for?

In this work, we present an answer to this question, by introducing
a class of processes that generalize the above reversible processes, are physically well motivated, compatible with \eqref{eq:second_law}, and yet allow for the extraction of positive work per particle with a probability that is independent of system size. 
We do so via the notion of a \emph{catalytic process}, in which we
allow for the reversible process to not only act on the system as such, but additionally on an auxiliary system that can be initially prepared in an arbitrary state, but whose marginal state has to be left invariant by the process. Such catalysts are well-motivated -- they allow a general description of thermodynamic processes in which the system may be interacting with some experimental apparatus (such as a quantum clock~\cite{erker2017autonomous,woods2019autonomous}), however not extracting energetic/information resources from such an ancilla. In terms of our discussion of the Av-SL above, catalysts correspond to the cyclically evolving auxiliary system. Despite being studied frequently in resource-theoretic formulations of thermodynamics~\cite{Brandao2015,Ng2015,Mueller2017,Boes2018a}, catalytic processes have never been studied in the context of fluctuation theorems until now. Furthermore, even in previous works of catalysis, the exact form of the catalyst is highly state-dependent and therefore rarely studied explicitly~\cite{Brandao2015,Mueller2017}.
In this work, we make progress in the significant gaps in the knowledge of catalysis, by presenting and discussing constructive examples of such catalytic processes in the framework where fluctuation theorems are commonly derived. We show that, by sharing the same catalyst, a group of agents can follow collective strategies to achieve highly correlated work-distributions. 
This makes these processes interesting for the field of \emph{quantum and stochastic thermodynamics} and potentially also for certain negentropic processes in biology. On the overall, our work provides a rigorous footing for the further study of thermodynamical processes that systematically exploit the notion of \emph{catalysis}
in order to achieve certain patterns of work fluctuations in an environment that is governed by the Av-SL. 
Given the broad applicability of our results, we believe that the study of such processes will produce many further interesting results of both foundational and practical interest.

\section{Setup}

\subsection{Formulation of the physical situation} 
We formulate our arguments and results in the language of quantum mechanics, but all of our results similarly apply to classical, stochastic systems.
We consider the setting depicted in Fig.~\ref{fig:setup}: A $d$-dimensional system $S$ with Hamiltonian $H = \sum_{i=1}^d E_i \proj{E_i}$ is initalized in the Gibbs state
\begin{equation}
	\omega_\beta(H) := \frac{\e^{-\beta H}}{Z(\beta, H)}, 
\end{equation}
where $Z(\beta, H) := \tr(\e^{-\beta H})$. 
This state describes a system initially in thermal equilibrium with its environment 
at inverse temperature $\beta := 1/(k_B T)$.
An agent (some experimenter) first performs an energy measurement on this system 
which produces a measurement outcome $ E_i $. According to quantum mechanics, the post-measurement state is described by the density matrix $\ketbra{E_i}{E_i}$. The agent then performs a physical operation on the system which does not depend on the outcome of the measurement. Such an operation can always be represented by a general quantum channel $\mc C$ (i.e., a trace-preserving, completely positive map that takes density matrices to density matrices) applied to the post-measurement state. This operation is then followed by a second energy measurement with respect to the same Hamiltonian with outcome $E_f$ \footnote{It is possible to extend the setup and our further results to the more general case of different Hamiltonians for the initial and final measurement. We present our results within this restricted settings for conceptual and notational simplicity.}.  
This procedure results in a channel-dependent joint distribution ${P(E_f, E_i) = P(E_f|E_i) P(E_i)}$. 
In general, a given quantum channel may be realized in different ways. Whether the change of energy $E_f-E_i$ can be interpreted as work from a thermodynamic point of view will depend on how exactly the quantum channel $\mc C$ was physically realized. We will assume that this is the case in the following, but will comment on this assumption again later on.
In particular, we can then define the work distribution $P$ for the above process as 
\begin{equation} \label{eq:work_distribution}
	P(W) := \sum_{i,f} P(E_f, E_i) \delta(W - (E_i - E_j)), 
\end{equation}
where $\delta$ is the Dirac delta distribution.
We are interested in investigating possible distributions $ P(W) $
that arise from different channels $\mc C$. To do so, it is useful to note the relation
\begin{equation} \label{eq:general_jar}
	\langle \e^{\beta W} \rangle = \sum_{j} \frac{e^{-\beta E_j}}{Z_H} \bra{E_j} \mc{C} [\one] \ket{E_j}, 
\end{equation}
which is straightforwardly derived using the above definitions, where $\one$ denotes the identity matrix. 

\begin{figure}[t]
	\centering
	\includegraphics[width=0.48\textwidth]{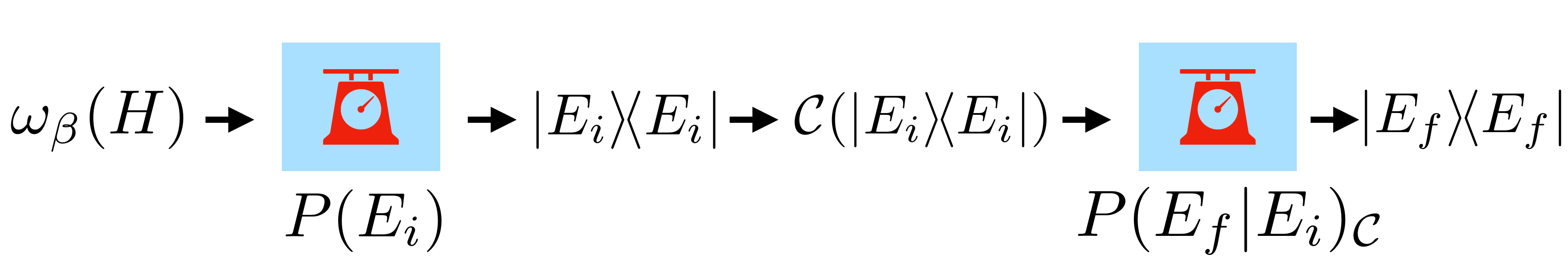}
	\caption{The basic setup for all processes in this work: An agent with access to a system $S$ 
		equipped with Hamiltonian $H$ that is assumed to be initially in thermal equilibrium with a heat bath at inverse temperature $\beta$ samples from $S$ (by measuring in the energy basis), then implements a process that maps the post-measurement state $\proj{E_i}$ to $\mc C(\proj{E_i})$, where $\mc C$ is a quantum channel. Finally, the agent repeats the energy measurement on $S$ with respect to the same Hamiltonian $H$. 
	}
	\label{fig:setup}
\end{figure}

In the standard setting of \emph{Tasaki-type fluctuation theorems}, $\mc C$ is considered to be a unitary channel $\mc C[\cdot] = U (\cdot) U^\dagger$, since these are generated by changing the Hamiltonian over time~\cite{Tasaki_2000}.  
For such channels, \eqref{eq:general_jar} becomes 
\begin{equation}\label{eq:jar}
	\langle \e^{\beta W} \rangle = 1, 
\end{equation} 
which is the well-known Jarzynski equality (JE) for cyclic, reversible processes \cite{Jarzynski_1997}. Eq. \eqref{eq:jar} is strictly stronger than \eqref{eq:second_law}, the latter being implied by \eqref{eq:jar} via Jensen's inequality.

\subsection{No macroscopic work}
One of the reasons for the importance of the JE derives from the fact that it gives strong bounds on the possibility of extracting work from a large system in a thermal state~\cite{jarzynski2011equalities,cavina2016optimal,maillet2019optimal}.
To see this, let $S$ be an $N$-particle system and define the probability of extracting work $w$ per particle as 
\begin{equation}
	p(w) := P(wN). 
\end{equation}
Plugging this into \eqref{eq:jar} yields that for any $ \epsilon >0 $,
\begin{equation}
	1 = \langle \e^{\beta W} \rangle =\sum_{w} \e^{\beta wN} \: P(wN) \geq \e^{\beta \epsilon N} \sum_{w\geq\epsilon}\: p(w), 
\end{equation}
which implies that events which extract significant positive work per particle from a macroscopic system at equilibrium are exponentially unlikely in $ N $. 
For later use, we formalize this property.

\begin{definition}[No macroscopic work]\label{dwf:nmw} Given a sequence of $N$-particle systems initially at thermal equilibrium with inverse temperature $\beta$ and channels $\mc C$ (implicitly depending on $N$), we say that the processes represented by $\mc C$ fulfill the \emph{no macroscopic work (NMW)} condition if the probability of an event extracting work per particle larger or equal than $\epsilon$ is arbitrarily small as $N\rightarrow \infty$,
	\begin{equation}\label{eq:probabilistic_secondlaw}
		\lim_{N\rightarrow \infty }p(w\geq \epsilon) := \lim_{N\rightarrow \infty}\sum_{w>\epsilon}\: p(w) = 0. 
	\end{equation}
\end{definition}
As is clear from the above, channels that satisfy the JE, such as unitary channels, also satisfy NMW and Av-SL. We now turn to investigate violations of JE and NMW for non-unitary channels.

\section{Violations of NMW and JE}
The first main result of this work is to introduce a physically motivated family of channels $\mc C$ that violates both NMW and JE, but respects the Av-SL. To aid comparison, we first briefly discuss other generalizations of the standard setting to non-unitary channels (see also Refs.~\cite{Rastegin_2013,Rastegin_2014}). 
\subsection{ Violating JE with non-unitary channels}
It is easy to see from \eqref{eq:general_jar} that a more general class of channels that satisfy the JE are \emph{unital} channels, that is, channels that satisfy $\mc C[\one] = \one$. Consequently, neither JE, nor in turn NMW or Av-SL can be violated in settings which give rise to a unital channel.
However, once this condition on unitality is relaxed, it becomes easy to violate JE on a formal level. For example, consider the fully-thermalizing channel that maps every input state to the thermal state $\omega_\beta(H)$, in other words $ \mathcal{C} (\cdot) = \omega_{\beta}(H) $. This channel always violates the JE whenever $ \omega_{\beta}(H)\neq \mathbf{I}/d $. It is, however, not clear how the energy-fluctuations can be interpreted as \emph{work} in this example, since thermalizing processes usually occur due to contact with a heat bath, in which case one would naturally interpret the changes of energy on the system being due to heat. Thus, while it is trivial to formally violate JE, it is not obvious whether it is possible to do so in a physically meaningful and operationally useful manner.
Nevertheless, in Appendix \ref{app:nmw_for_gp}, we show that the fully-thermalizing channel, in fact any channel with the thermal state as a fixed point, cannot violate the NMW condition for typical many-body systems, even if they may violate \eqref{eq:jar}. This means that, even if one interprets energy fluctuations as work, one still could not use the thermalizing channel to extract macroscopic amounts of work from a many-body system. 

\subsection{Violations of NMW and JE via $\beta$-catalytic channels}
The above findings raise the important question whether there exist channels for which the above procedure leads to a violation of NMW (and hence JE), while still respecting the Av-SL and allowing for the interpretation of the random variable $W$ as work extracted from $S$.
Such channels, if they exist, promise to be of great interest because they could allow for a systematic exploitation of relatively likely events extracting work from heat baths.
The first result of this work is to answer this question affirmatively.
To this end, we define the notion of a $\beta$-catalytic channel.

\begin{definition}[$\beta$-catalytic channel] \label{def:catalytic}
	A completely positive, trace-preserving map $\mc C$ is a \emph{$\beta$-catalytic channel} on S, if there exists a quantum state $\sigma_C$ on a system $C$ with  Hamiltonian $H_C$, together with a unitary $U$ such that $[\sigma_C,H_C]=0$ and 
	\begin{align}\label{eq:channel_dilation}
		\mc{C}(\cdot) = \tr_{C}(U (\: \cdot \otimes \: \sigma_C )U^\dagger),                                                  \\
		\label{eq:catalytic_condition}	\text{s.t.} \: \: \tr_{S}( U (\omega_{\beta}(H) \otimes \sigma_C )U^\dagger)=\sigma_C. 
	\end{align}
\end{definition}
Before stating our first main result, let us make some comments about this definition. First of all, we already assumed that the initial and final Hamiltonian coincides. This means that while during the process, $ \mathcal{C} $ may couple system and catalyst for example by introducing interaction terms $ H_{SC} $, nevertheless at the end of the process, the channel must also turn off such interaction terms. Secondly, note that $\beta$-catalytic channels describe reversible processes, in the sense that they do not change the entropy of the joint-system $SC$ and can be undone by acting on this joint-system by a unitary process. We refer to the system $C$ as being the ``catalyst'', understanding that it may be some by-stander system involving additional degrees of freedom. 
This terminology is motivated by the fact that, on average, i.e., if we do not condition on the outcomes of the energy measurements, then $C$ is returned, at the end of the procedure, to its original state. It can therefore be re-used for further rounds of the protocol with \emph{new} copies of $S$. Note, however, that the invariance of the reduced state on $C$ under the channel is required not for all initial states of $ S $, but \emph{only} for $\omega_{\beta}(H)$. As such, $\beta$-catalytic channels depend on $\beta$ and $H$ through the second condition. 

While Definition~\ref{def:catalytic} does not require the catalyst to be uncorrelated with $S$ at the end of the protocol, and in this sense goes beyond the conventional notion of catalysis discussed in the resource-theoretic literature on quantum thermodynamics \cite{Brandao2015,Ng2015}, the more general notion of catalysis that we employ here is receiving increasing interest in quantum thermodynamics, where it was shown to single out the quantum relative entropy, free energy and von Neumann entropy \cite{Wilming2017a,Mueller2017,Boes_2018}, to be useful in the context of algorithmic cooling \cite{Boes_2018,Alhambra_2019} and to show the energetic instability of passive states \cite{Sparaciari2017}. 
Finally, let us briefly comment on the interpretation of the random variable $W$ as work in the setting of $\beta$-catalytic channels and the role of the Hamiltonian of the catalyst. Since the process on $C$ and $S$ is unitary, 
it is meaningful to denote the total changes of energies of the two systems as work measured by a two-point measurement scheme on each system. 
This gives rise to a joint-distribution of work on the two systems $P(W^{(S)},W^{(C)})$. The probability distribution of work $P(W)$ discussed above then simply corresponds to the marginal distribution $P(W^{(S)})$ on $S$. Importantly, this distribution is independent of the Hamiltonian on $C$ (see Sec.~\ref{sec:nontrivialH} in the Appendix).
In particular, we can assume that the catalyst has trivial Hamiltonian $H_C=0$, which 
in turn implies $[\sigma_C,H_C]=0$ for any $\sigma_C$. 
It is then clear that no energy flows from the catalyst to the system, not even probabilistically.
For the rest of the article, we hence assume that $H_C=0$.

Given these constraints, it may, at first glance, be unclear how such a catalyst would offer any advantage to violating JE. For instance, one apparent way to make use of the catalyst is to perform a controlled unitary on $ S $, conditioned on $ C $: For some $ \sigma_C = \sum_i p_i \ketbra{i}{i} $, one uses a unitary in Eq.~\eqref{eq:catalytic_condition} of the form 
\begin{equation}
	U_{\rm SC}:=\sum_i U_i \otimes \ketbra{i}{i}_{\rm C}.
\end{equation}
This special case of $\beta$-catalytic channels by construction produces random unitary channels~\cite{audenaert2008random,Boes_2018} on $ S $, which have the form $ \mathcal{C}_{\rm RU} (\cdot) = \sum_i p_i U_i (\cdot) U_i^\dagger $. But random unitary channels are always unital, and therefore automatically satisfy JE. 

In the following, we show that there exist non-unital $\beta$-catalytic channels that allow for a meaningful 
violation of both NMW and JE, while at the same time they always respect the Av-SL. To see the latter, 
we note that
these channels necessarily increase the von Neumann entropy of the input Gibbs state. This follows 
from the sub-additivity of entropy and the fact that $C$ remains locally unchanged. 
Now, since $\omega_{\beta}(H)$ is the state with the least energy given a fixed entropy
\cite{Lenard1978,pusz1978}, then we also have that 
\begin{equation}
	\tr( H \mc{C}(\omega_{\beta}(H)) \geq  \tr( H \omega_{\beta}(H)) 
\end{equation}
which is just the Av-SL, concomitant with the findings of Ref.\ \cite{Boes2018a}. 
We stress that despite this property, $\beta$-catalytic channels are in general \emph{not}
unital. 
It remains to be shown that $\beta$-catalytic channels that violate JE and NMW do exist. We first show that JE 
can be violated already with small quantum systems, and then turn to the violation of NMW for 
macroscopic many-body systems with physically realistic Hamiltonians.\\

\emph{Microscopic violation of JE.} As a toy-like example of violating the JE with $\beta$-catalytic channels, we consider a system with three states -- two degenerate (but distinguishable) ground states and an excited state with energy $E$. As catalyst, we consider a system with two states and the unitary is a simple permutation between two pairs of energy eigenvalues of the joint system (for details, see App.~\ref{app:microscopic_toy_example}).
It is straightforward to compute the probability distribution of work for such small systems, which in this case leads to
\begin{equation}
	\langle \e^{\beta W}\rangle = \frac{Z+5+2(Z-2)(Z-1)}{Z(Z+1)} \geq 1, 
\end{equation}
where $Z=2+\e^{-\beta E}$ is the partition function of the system and we used $2\leq Z \leq 3$. We hence find $\langle \e^{\beta W}\rangle >1$ whenever $E>0$ (since then $Z<3$) and we obtain a moderate maximum violation in the limit $E\rightarrow \infty$ given by $\langle \e^{\beta W} \rangle = 7/6$.\\ 

\emph{Macroscopic violation of NMW condition.} 
We now show that one can violate the NMW principle using catalysts.

\begin{proposition}[Violation of no macroscopic work with catalysts] \label{prop:single_player}
	Let $(S^{(N)})_N$ be a sequence of $N$-particle locally interacting lattice systems with Hamiltonian $H^{(N)}$ that satisfy mild assumptions. Then, for sufficiently large $N$, there exist values of $\epsilon>0$, such that
	\begin{equation}\label{eq:violationjarzynski}
		p(w \geq \epsilon) 
	\end{equation}
	can be brought arbitrarily close to $\frac{1}{2}$ with $\beta$-catalytic channels.
\end{proposition}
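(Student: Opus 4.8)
The plan is to reduce the proposition to the construction, for each $N$, of a single $\beta$-catalytic channel $\mc C$ whose output $\mc C(\omega_\beta(H^{(N)}))$, read out in the energy eigenbasis, is bimodal: with probability at least $\tfrac12$ it sits a macroscopic amount below the equilibrium energy, while the remaining weight is parked at the infinite-temperature energy so that the von Neumann entropy does not decrease (which, as noted above, automatically enforces the Av-SL). I would first record the thermodynamic input that the ``mild assumptions'' are meant to supply: for finite-range, uniformly bounded local terms one has (a) $S(\omega_\beta(H^{(N)}))=Ns(\beta)+o(N)$ with $0<s(\beta)<\log d_0$, $d_0$ the local dimension (the strict bounds use $0<\beta<\infty$); (b) a well-defined, continuous microcanonical entropy density $e\mapsto s(e)$ with $s(e)<\log d_0$ near the equilibrium energy density $e_\beta$ and $s(e_\infty)=\log d_0$ at the infinite-temperature density $e_\infty>e_\beta$; and (c) subextensive energy fluctuations, $\mathrm{Var}_{\omega_\beta}(H^{(N)})=o(N^2)$, so the first-measurement distribution $P(E_i)$ has its energy density concentrating at $e_\beta$. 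These are standard facts about Gibbs states of short-range Hamiltonians.

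Next I would reduce to a statement about the output marginal only. Since $\omega_\beta=\sum_i P(E_i)\proj{E_i}$, linearity gives $\sum_i P(E_i)\langle E_f|\mc C(\proj{E_i})|E_f\rangle=\langle E_f|\mc C(\omega_\beta)|E_f\rangle$. Discarding the atypical outcomes $E_i<N(e_\beta-\epsilon)$ — a set of probability $\to0$ by (c) — every remaining pair with $E_f\leq N(e_\beta-2\epsilon)$ has $E_i-E_f\geq\epsilon N$, so for any $\beta$-catalytic $\mc C$,
\begin{equation}
	p(w\geq\epsilon)\ \geq\ \sum_{E_f\leq N(e_\beta-2\epsilon)}\langle E_f|\mc C(\omega_\beta)|E_f\rangle\ -\ P\!\left(E_i<N(e_\beta-\epsilon)\right),
\end{equation}
and the last term vanishes as $N\to\infty$. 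Hence it suffices to exhibit, for some fixed $\epsilon>0$, a $\beta$-catalytic channel whose output assigns energy weight at least $\tfrac12$ to $\{E_f\leq N(e_\beta-2\epsilon)\}$.

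For the construction I would fix $\epsilon>0$ small enough that $2s(\beta)-\log d_0<s(e_\beta-3\epsilon)$ — possible by (b), since the right-hand side tends to $s(\beta)$ as $\epsilon\to0$ whereas the left-hand side is a fixed number strictly below $s(\beta)$ — and set $\rho^\ast:=\tfrac12\rho_L+\tfrac12\rho_H$, where $\rho_L$ is the maximally mixed (microcanonical) state on an energy window of width $o(N)$ around $N(e_\beta-3\epsilon)$ and $\rho_H$ the microcanonical state at energy density $e_\infty$. Then $\rho_L$ is supported strictly below $N(e_\beta-2\epsilon)$, so $\rho^\ast$ has the required energy weight there; the two supports are orthogonal, so $S(\rho^\ast)=\log2+\tfrac12S(\rho_L)+\tfrac12S(\rho_H)=\tfrac N2\bigl(s(e_\beta-3\epsilon)+\log d_0\bigr)+o(N)>Ns(\beta)+o(N)=S(\omega_\beta)$ by the choice of $\epsilon$; and $\tr(H^{(N)}\rho^\ast)>\tr(H^{(N)}\omega_\beta)$, automatic from the entropy inequality by passivity of $\omega_\beta$. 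The strict entropy increase is exactly the hypothesis under which the catalytic entropy theorem \cite{Boes_2018} supplies a catalyst $\sigma_C$ (trivial Hamiltonian, so $[\sigma_C,H_C]=0$) and a unitary $U$ on $SC$ with $\tr_C(U(\omega_\beta\otimes\sigma_C)U^\dagger)=\rho^\ast$ and $\tr_S(U(\omega_\beta\otimes\sigma_C)U^\dagger)=\sigma_C$ — precisely a $\beta$-catalytic channel in the sense of Definition~\ref{def:catalytic} with $\mc C(\omega_\beta)=\rho^\ast$. Substituting into the displayed inequality gives $p(w\geq\epsilon)\geq\tfrac12-o(1)$, which is the claim.

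The hard part is this last step. One must invoke a catalytic state-transformation theorem that returns the catalyst marginal \emph{exactly} (Definition~\ref{def:catalytic} tolerates no error there) while acting on a system of dimension $d_0^N$, and verify that the hypotheses of \cite{Boes_2018} — strict entropy increase, any rank/support condition, and admissibility of the (large but finite) catalyst dimension — hold uniformly in $N$; the rest is soft. A secondary, routine task is to pin down the ``mild assumptions'' precisely enough that (a)--(c) are genuinely available. If one prefers to avoid citing \cite{Boes_2018}, the unitary $U$ can instead be written down explicitly as a permutation of the joint eigenbasis of $\omega_\beta\otimes\sigma_C$ that implements the rearrangement of eigenvalues $\{\e^{-\beta E_k}/Z\}\mapsto\{\text{eigenvalues of }\rho^\ast\}$ while keeping the catalyst register balanced, generalising the microscopic toy model above — explicit, at the price of more bookkeeping.
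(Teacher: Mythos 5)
Your overall strategy --- design a $\beta$-catalytic channel whose output is a bimodal state with half the weight sitting at a macroscopically low energy, use Anshu-type concentration of the Gibbs state to translate that into a bound on $p(w\geq\epsilon)$, and ensure entropy increase to stay consistent --- matches the paper's proof sketch closely (cf.\ their Eq.~(11), $\mc C(\omega_\beta)\approx_\delta\tfrac12\proj{E_-}+\tfrac12\tau$). The reduction to the output marginal, the choice of a low-energy microcanonical piece $\rho_L$, and the typicality estimate are all sound and essentially what the paper does in Appendix~D.

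The gap is in the step you yourself flagged as ``the hard part.'' You invoke a ``catalytic entropy theorem'' from \cite{Boes_2018} that, given $S(\rho^\ast)>S(\omega_\beta)$, supplies a catalyst and unitary realizing $\omega_\beta\mapsto\rho^\ast$ exactly while returning the catalyst marginal exactly. At the time of this paper no such single-shot theorem existed --- this is precisely the \emph{catalytic entropy conjecture}, and the paper says so explicitly, even adducing Proposition~1 as \emph{evidence for} that conjecture. Using the conjecture to prove Proposition~1 is therefore circular: the paper's whole point in Appendix~D is to \emph{construct} the catalyst and unitary by hand, rather than to deduce their existence from an entropy condition. (The asymptotic, many-copy result in \cite{Boes_2018} about $\rho^{\otimes n}\to{\rho'}^{\otimes n}$ does not give the single-shot exact-marginal transition $\omega_\beta\mapsto\rho^\ast$ that Definition~\ref{def:catalytic} demands.) A secondary problem is that even if you had the conjecture, your $\rho^\ast=\tfrac12\rho_L+\tfrac12\rho_H$ is supported only on two energy shells and so is far from full rank, whereas the conjecture requires the target to be full rank; you acknowledge a ``rank/support condition'' needs checking but this is not a technicality you can wave away.

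Concretely, the paper closes this by (i) working with a target $\tfrac12\proj{E_-}+\tfrac12\Omega(I_+)$ where $I_+$ is chosen with $g(I_+)=g(I)^2$ so that the bookkeeping of a \emph{specific permutation unitary} works out (exchanging $P_S(I)\otimes\proj{d_C}\leftrightarrow\proj{E_-}\otimes P_C$ and $P_S(I)\otimes P_C\leftrightarrow P_S(I_+)\otimes\proj{d_C}$), (ii) invoking a Brouwer fixed-point argument (their Lemma~\ref{lemma:catfromU}) to guarantee \emph{some} state $\sigma_C$ is catalytic for that specific $U$, and (iii) solving the marginal-invariance constraint for the dominant eigenvalue $q_{d_C}$ of that catalyst and showing $q_{d_C}\geq\tfrac12(1-\epsilon_N)$, whence $p(w\geq\epsilon)\geq q_{d_C}\,r(I)\geq\tfrac12-\epsilon_N$. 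Your suggested fallback --- writing down the permutation explicitly ``generalising the microscopic toy model'' --- is the right instinct and is what actually makes the proof go through, but as written it is a placeholder; the dimension matching $g(I_+)=g(I)^2$ and the fixed-point argument for the catalyst are precisely the content that has to be supplied.
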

We provide a proof and full statement of the assumptions in Appendix~\ref{app:canonical}. Our assumptions are satisfied by typical many-body Hamiltonians with energy windows in which the density of states grows exponentially \cite{huang1987statistical}.

While the formal proof of Proposition~\ref{prop:single_player} is given in the Appendix, the idea behind it is simple and we sketch it here on a higher level. For a given $N$, let $e^{(N)}$ denote the mean energy per particle of an $N$-particle system that satisfies our assumptions. In the proof, we show that for systems that satisfy the above assumptions and any $\delta > 0$, there exists an $N$ and a $\beta$-catalytic channel $\mc C$ such that 
\begin{equation} \label{eq:proof_sketch}
	\mc C(\omega_\beta) \approx_\delta \frac{1}{2} \proj{E_-} + \frac{1}{2} \tau,
\end{equation}
where $\approx_\delta$ denotes equality of the states on LHS and RHS up to $\delta$ in trace distance, $\ket{E_-}$ 
is some eigenvector of $H$ with $E_- < e^{(N)}N$ and $\tau$ is some other ``fail''-state the details of which are irrelevant. We can interpret Eq.~\eqref{eq:proof_sketch} as describing the approximation of a work extraction protocol that results in the state $\ket{E_-}$ with probability $1/2$. Now, as the result of standard concentration bounds, for large $N$ the mass of the thermal state $\omega_\beta$ will be highly concentrated around energy $e^{(N)}N$. This implies that every time the above work extraction protocol succeeds to prepare the ground state, for sufficiently high values of $N$ the extracted work per particle is arbitrarily close to $\epsilon \equiv e^{(N)} - E_-/N$, leading to the statement of Prop.~\ref{prop:single_player}. 

We note that it is remarkable that 
catalytic channels, which are guaranteed to satisfy the Av-2nd law, 
allow for the preparation of states like the one described in Eq.~\eqref{eq:proof_sketch}, 
in which a pure low-energy state carries much of the weight, from a thermal state. 
Indeed, it has recently been conjectured that with the help of catalysts \emph{any} state transition between full-rank states that increases the entropy is possible \cite{Boes2018a}, a statement known as the \emph{catalytic entropy conjecture}. Prop.~\ref{prop:single_player}, and in particular the ability to prepare the state in Eq.~\eqref{eq:proof_sketch}, further supports this conjecture, which has not been proven so far (even though strong evidence has been established).

Similar results as above also apply to the case in which the initial state of the system is described by a \emph{micro-canonical ensemble} rather than the Gibbs state, highlighting a similar contrast to fluctuation theorem results in the micro-canonical regime \cite{PhysRevE.77.051131}. For detailed discussions and proves of corresponding statements in this regime, see Appendix~\ref{app:microcanonical}.

One may wonder whether the creation of correlations between system and catalyst is in fact necessary to violate the NMW principle. This is indeed true, when one simply forces the catalyst to remain uncorrelated in the definition of $\beta$-catalytic channels. A proof of this statement along with further discussion on this problem can be found in Appendix~\ref{app:trumping}. Interestingly, such processes at the same time allow for a violation of the Jarzynski equality. A particular example is given by the fully thermalizing channel, which can be realized using a catalyst that is simply a copy of the Gibbs state of the system and the unitary simply swapping the system and catalyst.

\emph{Required size of the catalyst. }
Proposition~\ref{prop:single_player} not only shows that there exist catalytic procedures that allow an agent to bypass the work extraction bounds imposed by the JE -- the violation of JE is in fact exponential in the system size. In particular, \eqref{eq:violationjarzynski} implies that there exist values $\epsilon > 0$, such that
\begin{equation} \label{eq:unbounded_violation}
	\langle \mathrm e^{\beta W}\rangle  \: \geq \frac{1}{2} \mathrm e^{\beta N\epsilon} \gg 1 
\end{equation}
in the limit of large $N$. 
It is natural to wonder how far the JE can be violated and how big the catalyst has to be to realize a certain violation. 
This is clarified by the following result.
\begin{proposition}[Bound on violation of JE] \label{prop:cat_bound}
	Let $\mc C$ be any $\beta$-catalytic channel with $d_C = \dim(H_C)$. Then,
		 
	\begin{align}
		\langle \mathrm{e}^{\beta W} \rangle & \leq \mathrm{min}\{ d_C \norm{\sigma}_\infty, d \norm{\omega_\beta(H)}_\infty \} \\
		                                     & \leq \mathrm{min}\{d_C,d\},                                                      
	\end{align}
	where $\| \cdot \|_\infty$ denotes the $\infty$-norm, which, for density matrices, equals the largest absolute value of the input's eigenvalues.
\end{proposition}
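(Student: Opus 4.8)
The plan is to collapse the whole statement into a single trace identity and then bound one tensor factor at a time with an elementary operator-ordering estimate.

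First I would rewrite the left-hand side of~\eqref{eq:general_jar} as a trace. Since $\omega_\beta(H)=\sum_j (\e^{-\beta E_j}/Z_H)\proj{E_j}$, Eq.~\eqref{eq:general_jar} is nothing but
\begin{equation}
	\langle \e^{\beta W}\rangle = \tr\!\big(\omega_\beta(H)\,\mc{C}[\one]\big).
\end{equation}
Next I would substitute the dilation~\eqref{eq:channel_dilation}, $\mc{C}[\one]=\tr_C\!\big(U(\one\otimes\sigma_C)\herm{U}\big)$, and use the adjointness of the partial trace and the ampliation map $A\mapsto A\otimes\one_C$, i.e.\ $\tr_S[A\,\tr_C(X)]=\tr_{SC}[(A\otimes\one_C)X]$, to obtain
\begin{equation}\label{eq:pf_two_factor}
	\langle \e^{\beta W}\rangle = \tr_{SC}\!\big[(\omega_\beta(H)\otimes\one_C)\;U(\one\otimes\sigma_C)\herm{U}\big].
\end{equation}
I note that only the dilation form~\eqref{eq:channel_dilation} has been used so far, not the catalytic fixed-point condition~\eqref{eq:catalytic_condition}; hence the bound will in fact hold for \emph{any} channel presented with a $d_C$-dimensional ancilla in state $\sigma_C$.

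The two claimed bounds then come from replacing one of the two positive factors in~\eqref{eq:pf_two_factor} by a multiple of the identity. For $d_C\norm{\sigma}_\infty$: from $\sigma_C\le\norm{\sigma}_\infty\one_C$ and unitary conjugation, $U(\one\otimes\sigma_C)\herm{U}\le\norm{\sigma}_\infty\one_{SC}$, so pairing against the positive operator $\omega_\beta(H)\otimes\one_C$ gives $\langle\e^{\beta W}\rangle\le\norm{\sigma}_\infty\tr(\omega_\beta(H)\otimes\one_C)=d_C\norm{\sigma}_\infty$. For $d\norm{\omega_\beta(H)}_\infty$: from $\omega_\beta(H)\otimes\one_C\le\norm{\omega_\beta(H)}_\infty\one_{SC}$ and positivity of $U(\one\otimes\sigma_C)\herm{U}$, $\langle\e^{\beta W}\rangle\le\norm{\omega_\beta(H)}_\infty\tr\!\big(U(\one\otimes\sigma_C)\herm{U}\big)=\norm{\omega_\beta(H)}_\infty\,d$. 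Taking the minimum yields the first line, and the second follows from $\norm{\sigma}_\infty\le1$ and $\norm{\omega_\beta(H)}_\infty\le1$ for density matrices.

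I do not expect a genuine obstacle. The only points needing a little care are the elementary fact that $\tr(AB)\le c\,\tr(B)$ whenever $0\le A\le c\one$ and $B\ge0$ (write $\tr(AB)=\tr(B^{1/2}AB^{1/2})$ and use that the trace of a positive operator is non-negative together with $c\one-A\ge0$), and the bookkeeping of which subsystem each trace and identity lives on. The conceptually noteworthy point worth flagging in the write-up is that catalysis plays no role at all in this upper bound -- it is purely a dimension bound on the dilating ancilla -- which is precisely why saturating the exponential violation of Proposition~\ref{prop:single_player} forces a catalyst of dimension exponential in $N$.
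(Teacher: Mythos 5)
Your proof is correct and follows essentially the same route as the paper's: rewrite $\langle \e^{\beta W}\rangle$ as $\tr_{SC}\bigl[(\omega_\beta(H)\otimes\one_C)\,U(\one\otimes\sigma_C)U^\dagger\bigr]$ and then bound one tensor factor; where you use the operator-ordering step $0\le A\le c\one$, $B\ge0$ $\Rightarrow\tr(AB)\le c\,\tr(B)$, the paper invokes the equivalent trace H\"older inequality $\tr(AB)\le\norm{A}_\infty\norm{B}_1$ after passing to $\mathcal U^*$. Your side remark that the catalytic fixed-point condition~\eqref{eq:catalytic_condition} is never used is accurate and is a nice way of reading the result as a pure dilation-dimension bound.
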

This proposition, the simple proof of which is given in Appendix \ref{app:lower_bound_on_catalyst_s_dimension}, shows that in order to extract a growing amount of work from a single run of a process, an external agent will have to be able to prepare a state $\sigma$ on a growing auxiliary system and, more importantly, also have control over the increasingly large joint system. 
Hence, in practice, the ability to violate JE will still be constrained by operational limitations.
To illustrate the implications of Prop.~\ref{prop:cat_bound}, let us show how it immediately implies a bound on $P(W)$. As noticed when deriving the NMW principle, for any $\epsilon\geq 0$ we have
\begin{equation}
	\langle \e^{\beta W}\rangle \geq P(W\geq \epsilon) \e^{\beta \epsilon}. 
\end{equation}
Hence, Prop.~\ref{prop:cat_bound} implies
\begin{equation}
	P(W\geq \epsilon) \leq d_C\norm{\sigma}_\infty \e^{-\beta\epsilon}. 
\end{equation}
In particular this means that to extract a macroscopic amount of work, $W\geq w N$, with finite probability, $d_C$ has to grow exponentially with $N$ (note that $\norm{\sigma}_\infty\leq 1$).

\section{Multi-partite work extraction}
As emphasized before, even though the state of the catalyst remains unchanged in a catalytic process, in general it builds up correlations with the system.
We now show that the correlations established between catalyst and system allow for processes in which many agents re-use the same catalyst to obtain highly inter-correlated work distributions.

Consider $n$ agents, each with identical systems $S_i, i \in \{1, \dots, n\}$ that are initialized in the Gibbs state $\omega(\beta,H)$. For a given $\beta$-catalytic channel $\mc C$ with state $\sigma$ on the catalyst, consider the following protocol: Agent $1$ runs the standard process from Fig.~\ref{fig:setup} using the catalyst and hence implementing $\mc C$ between the two measurements. After the procedure, she then passes $C$ on to agent $2$ who repeats this process, and so on, until the last agent has received $C$ and performed the process. From the catalytic nature of $\mc C$, is is clear that, for each agent, the same marginal distribution of work is obtained. However, the joint work distribution for all agents will be correlated, due to individual correlations between each $S_i$ with $C$. 
We now show that the agents can use these correlations to systematically achieve certain global work distributions. Using the same notation as before, let $p(w_1, \dots, w_n)$ denote the global distribution over the extracted work per particle, assuming that all $S_i$ are copies of the same $N$-particle system. We have the following, proven in Appendix~\ref{app:many_player_strategies}.

\begin{proposition}[Multiple agents] \label{prop:many_agents}
	Let each $\lbrace S_i\rbrace_{i=1}^n$ be a sequence of N-particle systems that satisfy the conditions of Proposition~\ref{prop:single_player}. Then, for sufficiently large $N$, there exists an $\epsilon > 0$, such that
	\begin{equation}\label{eq:violation_jarzynski}
		\begin{split}
			p(\epsilon,-\epsilon, \epsilon, -\epsilon, \dots) &= \lambda, \\
			p(- \epsilon,\epsilon, -\epsilon, \epsilon, \dots) &= 1- \lambda,
		\end{split}
	\end{equation}
	where $\lambda$ can be brought arbitrarily close to $1/2$ using a sequence of $\beta$-catalytic channels on $S_i$ and $C$.
\end{proposition}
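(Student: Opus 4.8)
The idea is to plant a single fair coin in the catalyst with the first agent's channel and to have every later agent read it out \emph{coherently} -- by a unitary, never a measurement -- to toggle its own success/failure outcome and so anti-correlate with its predecessor. Following and sharpening the construction behind Proposition~\ref{prop:single_player}, I would build a $\beta$-catalytic channel $\mc{C}_1$ for agent~$1$, with catalyst state $\sigma$, such that applied to $\omega_\beta$ the joint state on $S_1C$ is $\delta$-close to $\tfrac12\,\proj{E_-}_{S_1}\otimes\eta^0_C+\tfrac12\,\tau_{S_1}\otimes\eta^1_C$, where $\mcH_C=\mcH_C^{(0)}\oplus\mcH_C^{(1)}$ is an orthogonal decomposition with $\dim\mcH_C^{(0)}=\dim\mcH_C^{(1)}$, the states $\eta^0,\eta^1$ are supported on $\mcH_C^{(0)},\mcH_C^{(1)}$ and satisfy $\tfrac12(\eta^0+\eta^1)=\sigma$, $E_-=(e^{(N)}-\epsilon)N$ with $e^{(N)}$ the mean energy per particle, and $\tau$ is concentrated in a thin energy window about $(e^{(N)}+\epsilon)N$. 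One picks $\epsilon>0$ so that the density of states near $(e^{(N)}+\epsilon)N$ exceeds $\e^{2sN}$ ($s$ the thermal entropy density, so that there is ``room'' for $\tau$) and $(e^{(N)}-\epsilon)N$ lies above the ground-state energy; the assumptions of Proposition~\ref{prop:single_player} -- exponential growth of the density of states over a wide enough window, in particular $2s$ staying below the maximal entropy density -- secure such an $\epsilon$ for large $N$. Beyond Proposition~\ref{prop:single_player} the only extra input is spectral bookkeeping: taking, e.g., $\eta^1$ pure, $\spec(\eta^0)$ equal to the (smoothed) spectrum of $\omega_\beta$ and $\spec(\tau)$ equal to that of $\omega_\beta^{\otimes 2}$, one checks that $\spec(\omega_\beta\otimes\sigma)$ then agrees with the spectrum of the target state, so a dilating unitary $U_1$ with catalyst marginal $\sigma$ exists. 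Crucially, the which-branch bit is encoded \emph{orthogonally on $C$} while the reduced state of $C$ equals $\sigma$ in either branch: a literal classical flag recording the branch in its own marginal would change the catalyst's reduced state and is thus ruled out by catalysis -- the branch has to live in correlations, which is exactly why subsequent agents can act on it only via a unitary and not via a measurement.

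\emph{One unitary for all later agents.} For every $i\ge 2$, agent~$i$ applies the same unitary $U_\star$ on $S_i\otimes C$, namely the direct sum of a unitary $S\otimes\mcH_C^{(0)}\to S\otimes\mcH_C^{(1)}$ sending $\omega_\beta\otimes\eta^0\mapsto\tau_S\otimes\eta^1$ and a unitary $S\otimes\mcH_C^{(1)}\to S\otimes\mcH_C^{(0)}$ sending $\omega_\beta\otimes\eta^1\mapsto\proj{E_-}_S\otimes\eta^0$; the equal-dimension and equal-spectrum data arranged above make $U_\star$ well defined. The induced channel $\mc{C}_\star(\cdot)=\tr_C(U_\star(\cdot\otimes\sigma)U_\star^\dagger)$ is then $\beta$-catalytic on $S$: decomposing $\sigma=\tfrac12\eta^0+\tfrac12\eta^1$ along the splitting and using that $U_\star$ acts on the two summands by the two unitaries above, $\tr_S(U_\star(\omega_\beta\otimes\sigma)U_\star^\dagger)=\tfrac12\eta^1+\tfrac12\eta^0=\sigma$, and with $H_C=0$ the requirement $[\sigma,H_C]=0$ is automatic.

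\emph{Running the protocol.} Starting from agent~$1$'s output: in the branch where agent~$1$ succeeded the catalyst is in $\eta^0\in\mcH_C^{(0)}$, so $U_\star$ applies its first block to agent~$2$, giving $S_2\to\tau$ and $C\to\eta^1\in\mcH_C^{(1)}$; in the branch where agent~$1$ failed the catalyst is in $\eta^1$, so agent~$2$ gets the second block, $S_2\to\proj{E_-}$ and $C\to\eta^0$. Iterating, the two equiprobable branches yield the alternating sequences of system states $(\proj{E_-},\tau,\proj{E_-},\tau,\dots)$ and $(\tau,\proj{E_-},\tau,\proj{E_-},\dots)$. As $\omega_\beta$ concentrates about energy $e^{(N)}N$ and $\tau$ about $(e^{(N)}+\epsilon)N$, with probability tending to $1$ each two-point measurement returns work per particle within $o(1)$ of $+\epsilon$ on a ``success'' eigenstate and of $-\epsilon$ on a ``failure'' eigenstate. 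Collecting the $\delta$-errors (one per agent, and $n$ is fixed) together with the concentration tails, the two alternating patterns each acquire probability $\tfrac12+O(n\delta)$, which tends to $\tfrac12$ as $N\to\infty$, giving the statement with $\lambda\to\tfrac12$; marginally every agent still sees the $\tfrac12/\tfrac12$ distribution over $\pm\epsilon$, consistent with catalysis.

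\emph{Main obstacle.} The real work lies in the first step: isolating a variant of Proposition~\ref{prop:single_player} in which (i) the failure state $\tau$ is pinned to a thin energy window around $(e^{(N)}+\epsilon)N$ rather than left unspecified, and (ii) the success/failure branch is stored in an orthogonal catalyst subspace while the reduced catalyst state remains $\sigma$ -- verifying that the relevant spectra still match after $\omega_\beta$ is smoothed to its typical subspace, and that the density-of-states hypotheses are strong enough to host the pinned $\tau$ (this is where the ``mild assumptions'' are really used). Everything downstream -- the definition of $U_\star$, the catalyticity check, and the error bookkeeping over the fixed number $n$ of agents -- is then routine.
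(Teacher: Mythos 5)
Your proposal follows essentially the same route as the paper's proof in Appendix~\ref{app:many_player_strategies}: a \emph{single} block off-diagonal unitary $U_\star$ that swaps two orthogonal catalyst subspaces $\mcH_C^{(0)}\leftrightarrow\mcH_C^{(1)}$ while simultaneously mapping the fresh Gibbs state to a low-energy eigenstate or to a high-energy ``fail'' block $\Omega(I_+)$, together with the observation that the catalyst's \emph{conditional} state after each agent's step deterministically lands in one subspace and so fixes the next agent's outcome, which yields exact alternation. Your $\eta^0,\eta^1$ with $\sigma=\tfrac12(\eta^0+\eta^1)$ recover exactly the paper's $\sigma=\tfrac{1}{2g}P_{\rm C}+\tfrac12\proj{d_{\rm C}}$, and $U_\star$ coincides with the unitary built in Appendix~\ref{app:microcanonical}/\ref{app:canonical}, so $\mc{C}_1$ and $\mc{C}_\star$ are in fact the same channel. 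Two small remarks: the clause ``the reduced state of $C$ equals $\sigma$ in either branch'' is a misstatement---conditionally $C$ is in $\eta^0$ or $\eta^1\neq\sigma$, and only the unconditional marginal equals $\sigma$ (which is all catalysis requires); and what you label as the ``main obstacle'' (pinning $\tau$ to a thin window around $(e^{(N)}+\epsilon)N$ and storing the branch orthogonally on $C$) is already built into the paper's proof of Proposition~\ref{prop:single_player} via the microcanonical window $I_+$ with $g(I_+)=g(I)^2$, with $E_-$ chosen so that $e-E_-/N=e_+-e$; so no genuinely new lemma is needed beyond the $\delta$-error bookkeeping you describe.
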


While \eqref{eq:violation_jarzynski} is clearly consistent with \eqref{eq:second_law}, this proposition shows that the agents can achieve joint work distributions that are strongly correlated and in which subsets of agents, in the above proposition one half of them, can violate JE arbitrarily, at the cost of the other half. 
Such distributions of work could, for example, be of interest in situations where 
the target is to maximize the probability that a subset of players extracts a positive amount work, at the ready cost of the others, for instance in order to surpass an activation energy. 
Importantly, the size of the catalyst needed to realize the distribution \eqref{eq:violation_jarzynski} is fixed, i.e., it does not scale with the number of agents $n$.

Proposition~\ref{prop:many_agents} shows the existence of catalytic processes that produce very interesting global work distributions. This naturally raises the question what other global distributions can be obtained in a setting without making the size of the catalyst depend on the number of rounds. Our results, however, already imply that not every distribution compatible with the Second Law can be obtained in such a way. For instance, Proposition~\ref{prop:cat_bound} implies that the distribution 
\begin{equation} \label{eq:all_same_dist}
	p(\epsilon,\epsilon, \epsilon, \epsilon, \dots) = p(- \epsilon,-\epsilon, -\epsilon, -\epsilon, \dots) \approx 1/2 
\end{equation}
cannot be obtained via $\beta$-catalytic channels, 
since otherwise there would exist a catalyst of fixed size that would allow, for any $n$, the total work $W=n\epsilon$ to be extracted with probability approximately $1/2$, in violation of Proposition~\ref{prop:cat_bound}.

\section{Summary and future work.} In this work we have studied work extraction protocols from states at thermal equilibrium. We significantly expand the common setting of fluctuation theorems under cyclic, 
reversible processes by introducing a catalyst---an additional system which, on average, remains unchanged after the protocol and can thus be re-used. 
This extension enables for distributions of work extraction that are not attainable without a catalyst. More precisely, one can bypass the stringent conditions imposed by the JE, achieving positive work per particle with high probability, even for macroscopic systems.
Furthermore, it allows for interesting, correlated work distributions when many agents use the same catalyst. 

Our constructions illustrate in a striking way that the absence of correlations, sometimes referred to as `stochastic independence', can also be a powerful thermodynamic resource \cite{Pastena}. This complements findings where the initial presence of correlations between a system and an ancilla are used to bypass the standard constraints imposed by fluctuation theorems \cite{sagawa2012fluctuation,sagawa2013role}. We discuss the connection of our work to these findings in more detail in Appendix \ref{app:sagawa}.
We believe that the further study of work distributions that can be obtained by collaborating agents by means of $\beta$-catalytic channels will yield both foundational and practical insights.

We further believe that it is an interesting open problem to study how the size of the catalyst has to scale if one wishes to maximize the probability to extract a certain amount of work.
For example, in the context of a many-body system one might be content with extracting only an amount of work of the order of $\sqrt{N}$ if in exchange for that one can either increase the probability for it to happen significantly or can reduce the size of the catalyst considerably (and hence the complexity of the unitary required to be implemented).

It would be interesting to understand the relation between our results and a more generalized type of JE in the presence of information exchange \cite{PhysRevLett.109.180602}, for example in a Maxwell demon scenario. In particular, in Ref.\ \cite{toyabe2010experimental} it was also demonstrated that by using feedback control, one may also violate JE while respecting the Av-SL. 
More generally, our results also raise the question whether other phenomena --usually described as forbidden by the second law, or as occurring with vanishing probability-- can be made to occur with high probability using catalysts. For example, is it possible to reverse the mixing process of two gases or induce heat flow from a cold to a hot system with finite probability in macroscopic systems? The techniques developed in this work provide a promising ansatz for the study of this and similar questions.

\emph{Acknowledgements.} 
\label{sec:acknowledgements}
We thank Markus P. M{\"u}ller and Alvaro M. Alhambra for valuable discussions and anonymous referees for interesting comments. P.~B. acknowledges support from the John Templeton Foundation. H.~W.\ acknowledges support from the Swiss National Science Foundation through SNSF project No. 200020\_165843 and through the National Centre of Competence in Research \emph{Quantum Science and Technology} (QSIT). N.~H.~Y.~N.\ acknowledges support from the Alexander von Humboldt Foundation. 
R.~G.\ has been supported by the DFG (GA 2184/2-1). J.~E.\ acknowledges support by the DFG (FOR 2724), 
dedicated to quantum thermodynamics, and the FQXi.

\bibliographystyle{apsrev4-1}
\bibliography{catalytic_fluct.bib_doi}
\newpage\appendix
\renewcommand{\thesection}{\Alph{section}}
\section{NMW for Gibbs preserving maps}
\label{app:nmw_for_gp}
Thermalizing quantum maps, in particular those studied in the resource theoretic framework, are maps that model the evolution of a non-equilibrium quantum state as it exchanges heat with its surrounding thermal bath. Several variants of these maps exist~\cite{horodecki2013fundamental,Brandao2015,Ng2015,Perry_2018,faist2015gibbs}, but a common feature is that they are \emph{Gibbs preserving (GP)}, namely that the Gibbs canonical state is a fixed point of such maps. Thermalizing maps are often viewed as ``free operations'' in a resource theoretic context, since they allow only for heat (instead of work) exchange with an environment in thermal equilibrium.
In this section, we demonstrate two things: First, that even such thermodynamically ``cheap'' channels may violate the JE very strongly, due to non-unitality. Secondly, that they cannot be used to violate the NMW condition. A diagrammatic overview over the various properties of channels with respect to JE and NMW is given in Fig.~\ref{fig:diagram_violation}. 

We now turn to the first point. Given a $ d $-dimensional system $ S $ with Hamiltonian $ H $, the violation of JE can be calculated for the thermalizing channel as
\begin{align}
	\langle e^{-\beta W} \rangle & = d   \sum_{j} \frac{e^{-\beta E_j}}{Z_H} \bra{E_j} \mc{C} \left[\one/d\right] \ket{E_j}, \\
	                             & =d                                                                                        
	\sum_{j} \frac{e^{-\beta E_j}}{Z_H} \bra{E_j} \omega_\beta (H) \ket{E_j}=\frac{d}{d_{\rm eff}},
\end{align}
where $ d_{\rm eff}:={1}/{\tr (\omega_\beta(H)^2)} $ is known as the effective dimension \cite{PhysRevLett.106.040401} of the thermal state. One sees from the above that JE is always violated for $ \beta >0 $, since $ d_{\rm eff} \leq d $, with equality only when $ \omega_\beta(H)=\one/d $ is maximally mixed. For $ N $ non-interacting i.i.d.\ systems, both $ d $ and $ \tr(\rho^2) $ scale exponentially with $ N $, leading to an exponential violation in $ N $ for JE.

Turning to the second point, one may wonder how this notion of thermodynamically free channels can be reconciled with the fact that JE is violated. However, note that in the standard JE setting, the work variable is traditionally defined in terms of a fluctuating (measured) energy difference in the system, and does not inherently distinguish between work and heat contributions -- unlike resource-theoretic settings where heat flow is allowed for free, but measurements incur a thermodynamic cost. Here, we consider an operationally more meaningful characterization (NMW as defined in Def.~1 of the main text), and show that NMW cannot be violated using channels that preserve the Gibbs state in generic many-body systems. 
The only assumptions that we make are that i) the system has uniformly bounded, local interactions on a $D$-dimensional regular lattice and ii) a finite correlation length, i.e., the temperature is non-critical.
\begin{lemma}[Non-violation of NMW for Gibbs-preserving maps]\label{lemma:NMW-GP}
	No channel $\mc E$ that preserves the Gibbs state can violate NMW for locally interacting many-body systems at a non-critical temperature.
\end{lemma}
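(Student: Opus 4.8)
\emph{Proof proposal.} The plan is to show that a Gibbs-preserving channel leaves the \emph{final} energy measurement Gibbs-distributed, exactly like the initial one, and that a macroscopic work event would then force at least one of the two measurements to land far from the typical energy --- an event that is suppressed by concentration of measure in non-critical many-body systems. First I would exploit linearity together with Gibbs preservation: writing the spectral decomposition $\omega_\beta(H)=\sum_i P(E_i)\proj{E_i}$ with $P(E_i)=\e^{-\beta E_i}/Z_H$ the outcome distribution of the first measurement, $\mc E(\omega_\beta(H))=\omega_\beta(H)$ gives $\sum_i P(E_i)\,\mc E(\proj{E_i})=\omega_\beta(H)$. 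Since the conditional probability of the second outcome is $P(E_f|E_i)=\bra{E_f}\mc E(\proj{E_i})\ket{E_f}$, summing the joint distribution over $i$ yields the final-energy marginal $P(E_f)=\bra{E_f}\omega_\beta(H)\ket{E_f}=\e^{-\beta E_f}/Z_H$. Hence both the initial and the final energy records are Gibbs-distributed; the argument is unchanged in the presence of degeneracies upon replacing eigenvectors by energy eigenprojectors.

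Next I would use the elementary inclusion, writing $\langle H\rangle_{\omega_\beta}=e^{(N)}N$,
\begin{equation}
\{W\ge\epsilon N\}=\{E_i-E_f\ge\epsilon N\}\subseteq\Big\{E_i\ge\big(e^{(N)}+\tfrac{\epsilon}{2}\big)N\Big\}\cup\Big\{E_f\le\big(e^{(N)}-\tfrac{\epsilon}{2}\big)N\Big\},
\end{equation}
so that, by the union bound together with the previous step,
\begin{equation}
p(w\ge\epsilon)=P(W\ge\epsilon N)\le P_{\omega_\beta}\!\Big(H\ge\langle H\rangle+\tfrac{\epsilon}{2}N\Big)+P_{\omega_\beta}\!\Big(H\le\langle H\rangle-\tfrac{\epsilon}{2}N\Big).
\end{equation}
The right-hand side is now independent of $\mc E$ and is a purely thermodynamic property of the Gibbs state. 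Finally I would invoke concentration of the energy: for a uniformly bounded, locally interacting Hamiltonian $H=\sum_x h_x$ on a $D$-dimensional regular lattice at a non-critical (finite-correlation-length) temperature, exponential clustering of connected correlation functions gives $\mathrm{Var}_{\omega_\beta}(H)=\sum_{x,y}\mathrm{Cov}(h_x,h_y)=\landauO(N)$, equivalently a finite heat capacity per particle from analyticity of the free-energy density. Chebyshev's inequality then bounds each term above by $\landauO(N)/(\tfrac{\epsilon}{2}N)^2=\landauO(1/N)\to 0$, which establishes \eqref{eq:probabilistic_secondlaw} for every $\epsilon>0$. (Clustering in fact yields a large-deviation bound and hence exponential decay in $N$, but the Chebyshev estimate is already more than sufficient.)

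I expect the only genuine technical point to be the variance/concentration estimate: pinning down exactly which "mild" lattice hypotheses guarantee exponential decay of correlations (or analyticity of the free energy) at a non-critical temperature for general uniformly local Hamiltonians, and citing the corresponding statement. Everything else is the one-line computation that Gibbs preservation transports the concentration of the input ensemble to the output measurement, plus a union bound, so the real content of the lemma is that observation rather than any delicate estimate.
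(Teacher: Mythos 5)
Your proposal is correct and reaches the same conclusion by a genuinely different and, in places, cleaner route than the paper. The paper's proof does not compute the marginal of the \emph{final} energy outcome; instead it compares the channel output $\mc E(\sigma_2)$ of the truncated thermal state $\sigma_2$ (the normalized restriction of $\omega_\beta$ to the typical window) to $\mc E(\omega_\beta)=\omega_\beta$ via the \emph{data-processing inequality} for trace distance, obtaining $d(\omega_\beta,\mc E(\sigma_2))\leq d(\omega_\beta,\sigma_2)=p(T_{N,\delta}^c)$, and then runs a conditional argument: given that the first outcome lies in the typical window, the second one does too, up to an error $2p(T_{N,\delta}^c)$; a decomposition over the complementary conditioning event gives $p(w\geq a)\leq 3p(T_{N,a/2}^c)$. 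Your observation that linearity plus Gibbs-preservation directly forces the \emph{unconditional} final-energy marginal to be Gibbs, $P(E_f)=\bra{E_f}\mc E(\omega_\beta)\ket{E_f}=\e^{-\beta E_f}/Z_H$, shortcuts the data-processing step and the conditioning entirely and reduces everything to a union bound applied to two one-sided tails of the \emph{same} distribution. The only place where the two proofs differ in strength is the concentration input: the paper invokes Anshu's sub-exponential concentration bound, which matches its stated "(sub-)exponentially small" rate, while you default to Chebyshev with $\mathrm{Var}(H)=\landauO(N)$ from exponential clustering, which gives only $\landauO(1/N)$ but is more than enough for the lemma as stated (and as you note you could plug in a large-deviation estimate to recover the exponential rate). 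Your additional worry about degeneracies is moot under the paper's convention $H=\sum_i E_i\proj{E_i}$ with a fine-grained rank-one two-point measurement, though your handling of the coarse-grained case is also correct.
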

\begin{proof}
	We aim at showing that for any $a>0$, $p(w\geq a) = p(W\geq aN) \rightarrow 0$ as $N\rightarrow \infty$.
	The basic idea behind our proof is to make use of typicality. 
	Let $e^{(N)}$ denote the energy density of the $N$-particle system and denote by $\Pi_{\delta}^{(N)}$ the projector onto energy eigenstates with energies in the interval $T_{N,\delta}:=[(e^{(N)} - \delta)N,(e^{(N)}+\delta)N]$. 
	Finally, denote by $p(\cdot)$ the initial probability distribution of energy of the thermal state $\tau_S^{(N)}$, e.g., the probability that the initial energy measurement yields $E_i\in T_{N,\delta}$ is given by
	\begin{align}
		p(T_{N,\delta}) := \tr\left(\tau_S^{(N)} \Pi_\delta^{(N)}\right). 
	\end{align}
		
	A theorem by Anshu \cite{Anshu2016} shows that under the given conditions most weight of the thermal state $\tau_S^{(N)}$ of the $N$-particle system is contained in a typical subspace. 
	More precisely, for a many-body system described by a $D$-dimensional lattice, there exist constants $C,K>0$ such that for any $\delta>0$ we have  
	\begin{align}\label{anshu}
		p(T_{N,\delta}) \geq 1 - C \e^{-\frac{(\delta^2 N)^\frac{1}{1+D}}{K}}. 
	\end{align}
	This is equivalent to saying that 
	\begin{equation}
		p(T_{N,\delta}^c) \leq C \e^{-\frac{(\delta^2 N)^\frac{1}{1+D}}{K}}, 
	\end{equation}
	where $T_{N,\delta}^c = \RR\setminus T_{N,\delta}$. 
	In particular, in the case of $D=0$, i.e., $N$ non-interacting systems, we find the usual scaling obtained from Hoeffding's inequality.
	In the following, for simplicity of notation, we write $\sigma_1=\tau_S^{(N)}$ and consider the normalized state $\sigma_2$ obtained by restricting $\tau_S^{(N)}$ to the subspace $\Pi_\delta^{(N)}$ as
	\begin{align}
		\sigma_2 := \frac{\Pi_\delta^{(N)} \tau_S^{(N)}}{p(T_{N,\delta})}. 
	\end{align}
	Let us further write $\mc E (\sigma_{1(2)}) = \sigma_{1(2)}'$, where $\sigma_1' = \sigma_1$ by assumption.
	Since the trace distance $d(\rho_1,\rho_2) \coloneqq \frac{1}{2}\tr(|\rho_1 - \rho_2|)$ fulfills the data processing inequality, 	
	\begin{equation}
		d(\sigma_1,\sigma_2') = d(\sigma_1',\sigma_2') \leq d(\sigma_1,\sigma_2) = p(T_{N,\delta}^c).
	\end{equation}
	Using the operational meaning of trace distance $d(\rho_1,\rho_2) = \displaystyle\max_{0\leq M\leq I} |\tr (M(\rho_1-\rho_2))|$ \cite{nielsenchuang}, this means that 
	\begin{equation}\label{key3}
		|\tr(\Pi_{\delta}^{(N)} \sigma_1) - \tr (\Pi_{\delta}^{(N)}\sigma_2') | \leq p(T_{N,\delta}^c)
	\end{equation}
	and, in turn, 
	\begin{align}\label{key2}
		\tr (\Pi_{\delta}^{(N)} \sigma_2') \geq p(T_{N,\delta})-p(T_{N,\delta}^c) = 1-2 p(T_{N,\delta}^c). 
	\end{align}
	To see this, note that~\eqref{key2} follows from~\eqref{key3} directly if $\tr (\Pi_{\delta}^{(N)}\sigma_2') \leq \tr(\Pi_{\delta}^{(N)} \sigma_1)$, and as 
	\begin{equation}
		\tr (\Pi_{\delta}^{(N)}\sigma_2') > \tr(\Pi_{\delta}^{(N)} \sigma_1) \geq \tr(\Pi_{\delta}^{(N)} \sigma_1) - p(T_{N,\delta}^c)
	\end{equation}
	otherwise.
	This means that, conditioned on the fact that the initial state was within the typical energy window ($E_i\in T_{N,\delta}$), the final energy $ E_f $ is also within this energy window except with probability $ 2 p(T_{N,\delta}^c) $, which is (sub-)exponentially small in $ N $. We will use this later.
		
	We are now ready to evaluate the probability of obtaining macroscopic work. 
	\begin{align*}		
		p(w\geq a ) & = p(T_{N,\delta}) \cdot p(w\geq a | E_i\in T_{N,\delta})                \\
		            & \quad\quad+ p(T_{N,\delta}^c)\cdot p (w\geq a | E_i \in T_{N,\delta}^c) \\
		            & \leq p(w\geq a | E_i \in T_{N,\delta}) + p(T_{N,\delta}^c).             
	\end{align*}
	We can estimate the first term as
	\begin{align*}
		p(w\geq a |  E_i \in T_{N,\delta}) & \leq p(E_f \leq (e^{(N)} +\delta - a)N| E_i\in T_{N,\delta}). 
	\end{align*}
	We now choose $\delta = a/2$ and get
	\begin{align*}
		p(w\geq a |  E_i \in T_{N,\delta}) & \leq p(E_f \leq (e^{(N)} - a/2)N | E_i \in T_{N,\delta})           \\
		                                   & \leq \tr\left[\sigma_2' \left(\one - \Pi_{a/2}^{(N)}\right)\right] \\
		                                   & \leq 2 p(T_{N,a/2}^c),                                             
	\end{align*}
	where we have used \eqref{key2} in the last step.
	Altogether, we thus find
	\begin{align*}
		p (w\geq a ) & \leq 3 p(T_{N,a/2}^c), 
	\end{align*}
	which decays to zero (sub-)exponentially by \eqref{anshu}. This concludes the proof.
\end{proof}
As a side-remark, we note that if the Gibbs-preserving channels that appear here are interpreted as modelling the interaction with a heat bath, then the above result can be interpreted as a "no macroscopic heat" statement: If a macroscopic system is brought in thermal contact with a heat bath at the same temperature, then the probability of an exchange of a macroscopic amount of heat is arbitrarily small in the system size. 

\begin{figure}[t]
	\centering
	\includegraphics[width=0.33\textwidth]{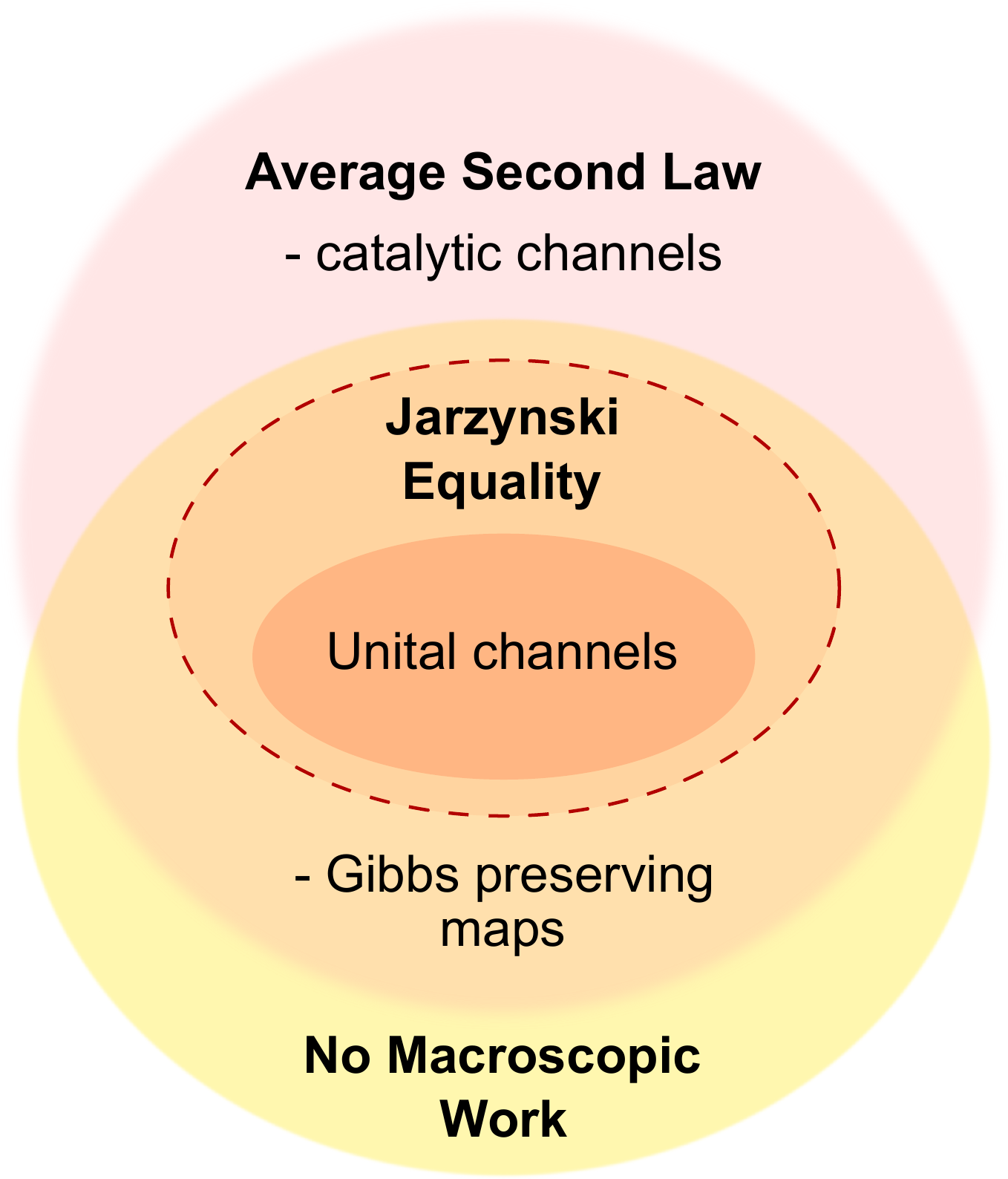}
	\caption{A summary of different criteria (Av-SL, NMW and JE) mentioned in the main text, with examples of maps according to this characterization.}
	\label{fig:diagram_violation}
\end{figure}
\section{Microscopic toy example} 
\label{app:microscopic_toy_example}

In this section, we show that already for small systems and using catalysts, the JE can be violated. We do so by constructing non-unital catalytic channels. Indeed, such maps can be realized ``quasi-classically'', in the sense that in the construction it is sufficient to consider the energy spectra of the involved states and that all unitaries are simple permutations of those values. We consider a 3-level system with energy levels $E_1=0,E_2=0,E_3=\Delta$ in the thermal state 
\begin{align}
	w= \left(\frac{1}{Z},\frac{1}{Z},\frac{Z-2}{Z}\right), 
\end{align}
where $Z= 2+ \exp(-\beta\Delta)$ is the partition function and we express the state as a probability vector, such that $w_i$ denotes the $i$th eigenvalue of the thermal state. For later, we observe that $2\leq Z\leq 3$.

We are going to construct a simple non-unital catalytic channel that involves a 2-dimensional catalyst. 
Let $e_{i}$ and $f_j$ denote the basis states for the vector spaces $\mc V_S$ and $\mc V_C$ describing the system and catalyst respectively. We define the permutation $\pi$ acting on the joint vector space $\mc V_S \otimes \mc V_C$ as that permutation which exchanges the respective levels $e_1 \otimes f_1 \Leftrightarrow e_2 \otimes f_2$ and $e_2 \otimes f_1 \Leftrightarrow e_3 \otimes f_2$ and leaves all other entries unchanged (see Fig.~\ref{fig:example}).
For the catalyst to remain unchanged for this permutation and initial system state, it is easy to check that the catalyst has to be given by the vector
\begin{align}
	q	= \left(\frac{Z-1}{Z+1},\frac{2}{Z+1}\right). 
\end{align}
Now, the catalytic channel $\mc C$ induced by this catalyst and permutation on the system has the general effect 
\begin{align}
	\mc C(p_1,p_2,p_3) = (q_1 p_2 + q_1 p_3, q_1 p_1 + q_2 p_3, q_2 p_1 + q_2 p_2), 
\end{align}
so that, in particular, the maximally mixed input state is mapped to 
\begin{align}
	\mc C(\one/3) = \frac{2}{3}\left(\frac{Z-1}{Z+1}, \frac{1}{2}, \frac{2}{Z+1}\right), 
\end{align}
which is different from the maximally mixed vector for any $\Delta > 0$.

\begin{figure}[t]
	\begin{tabular}{c  c | c}
		$\red{q_2 p_3}$ & $q_1 p_3$        & $p_3$ \\
		$\red{q_2 p_2}$ & $\blue{q_1 p_2}$ & $p_2$ \\
		$q_2 p_1$       & $\blue{q_1 p_1}$ & $p_1$ \\
		\hline
		$q_2$           & $q_1$            &       
	\end{tabular}
	$\: \rightarrow \:$
	\begin{tabular}{ c  c | c}
		$\blue{q_1 p_2}$ & $q_1 p_3$       & $q_1 p_2 + q_1 p_3$ \\
		$\blue{q_1 p_1}$ & $\red{q_2 p_3}$ & $q_1 p_1 + q_2 p_3$ \\
		$q_2 p_1$        & $\red{q_2 p_2}$ & $q_2 p_1 + q_2 p_2$ \\
		\hline
		$q_2$            & $q_1$           &                     
	\end{tabular}
	\caption{We represent the joint state of system and catalyst by means of a table. \emph{Left:} At the beginning the joint system starts out in a product state, so that the entry $(i,j)$ is given by the product of the $i$th eigenvalue of the system and $j$th eigenvalue of the catalyst. \emph{Right:} After applying the permutation highlighted in red, the marginal state of the system, given by the rows sums, has changed, while the marginal state of the catalyst (given by the column sums), has to remain invariant. For a two-dimensional catalyst, specifying the permutation and initial system state fixes the catalyst state.} 
	\label{fig:example}
\end{figure}
What is more, we can also directly calculate the work-distribution $p(w)$, yielding
\begin{align}
	p(0)       & =\frac{1}{Z(Z+1)}\left[Z+3+2(Z-2)(Z-1)\right], \\
	p(\Delta)  & = \frac{2(Z-2)}{Z(Z+1)},                       \\
	p(-\Delta) & = \frac{Z-1}{Z(Z+1)}.                          
\end{align}
We now want to compute $\langle \e^{\beta W}\rangle$. 
To do so, it is useful to note that $e^{-\beta \Delta}= Z-2$ and hence $\e^{\beta\Delta}=1/(Z-2)$. 
We find
\begin{align}
	\langle \e^{\beta W}\rangle = \frac{Z+ 5 + 2(Z-2)(Z-1)}{Z(Z+1)} \geq 1. 
\end{align}
In fact, this quantity is larger than $1$ whenever $Z<3$, corresponding to $\Delta >0$. Its maximum is given as $7/6$ for $Z=3$, which corresponds to $\Delta \rightarrow \infty$. Thus, the Jarzynski inequality is violated. At the same time the second law is fulfilled as expected, since $p(-\Delta) \geq p(\Delta)$. 

\section{Work extraction for initial micro-canonical ensembles} 
\label{app:microcanonical}
In this appendix, we show that a statement similar to Proposition~1 of the main text holds in the slightly different setting of a micro-canonical initial state. 
This serves two purposes: i) in statistical mechanics, one often assumes that closed, macroscopic systems are described by microcanonical ensembles due to the postulate of equal a priori probabilities of microstates corresponding to a macrostate. 
ii) The proof for the microcanonical initial state is conceptually simpler, but also provides the blueprint for the slightly more involved proof in the case of a canonical state, which is provided in Sec.~\ref{app:canonical}.

In the following, we denote by $I\subset \RR$ an energy window, by $g(I)$ the number of energy eigenstates in this window,
\begin{align}
	g(I) = \sum_{E_i \in I}1, 
\end{align}
and the corresponding micro-canonical state by
\begin{align}
	\Omega_{\rm S}(I) = \frac{1}{g(I)}\sum_{E_i \in I} \proj{E_i}. 
\end{align}
A micro-canonical energy window around energy density $e$ is any energy window $I(e)$ of the form $[e- O(\sqrt{N}),e]$, where $N$ is the number of particles.

The only difference to the standard setting described in the main text (as depicted in Fig.~1) is that the initial state differs from the thermal state $\omega_\beta(H)$. Instead, it
is given by the micro-canonical ensemble. In other words, given a micro-canonical energy window $I$, we consider channels $ \mc C $ of the form
\begin{align}
	  & \mathcal{C}(\cdot) = {\rm Tr}_{\rm C} (U (\cdot\otimes\sigma_{\rm C})U^\dagger)                    \\
	  & {\rm s.t.}~ {\rm Tr}_{\rm S} (U (\Omega_{\rm S}(I)\otimes\sigma_{\rm C})U^\dagger)=\sigma_{\rm C}. 
\end{align}
We carry over notation from the main text, so that $p(w \geq \epsilon)$ denotes the probability of measuring the system's energy per particle decrease by at least an amount $\epsilon$, and so on. Furthermore, we take the catalyst Hamiltonian in our construction to be $ H_C = \mathbf{I} $.

We will now first show that the NMW principle also holds for micro-canonical states of generic many-body systems. After that we will show that it can be circumvented using catalysts. 
To show the validity of the NMW principle we will use the same reasoning as presented in Ref.~\cite{Goldstein_2013}, where the NMW principle has been studied before. Thus, the following proof is essentially a reproduction for the convenience of the reader. We consider a sequence of many-body Hamiltonians $H_S^{(N)}$ on $N$ particles with the generic property of having an exponential density of states:
\begin{align}\label{eq:densityofstates}
	g((-\infty,E]) \coloneqq \sum_{E_i \leq E}1 = \e^{N \mu(E/N) - o(N)}, 
\end{align}
where $\mu$ is a strictly monotonic and differentiable function independent of $N$ and $o(N)$ denotes terms small compared to $N$, $\lim_{N\to \infty} o(N)/N = 0$. 
\begin{proposition}[NMW for micro-canonical states] Consider a sequence of $N$-particle Hamiltonians fulfilling \eqref{eq:densityofstates} and a sequence of micro-canonical energy-windows $I^{(N)}=[eN, eN + \delta\sqrt{N}]$ around energy density $e$ (with $\delta>0$ fixed). Then for any unital channel acting on the $N$-particle system, the probability of extracting work $w$ per particle is bounded as
	\begin{align}
		p(w > \epsilon) \leq C \e^{- \mu'(e) \epsilon N + o(N)}, 
	\end{align}
	where $C>0$ is a constant and $\mu'$ denotes the derivative of $\mu$.
	\begin{proof}
		Let $I_\leq := (-\infty,(e-\epsilon)N + \delta \sqrt{N}]$, denote by $P_{\rm S}(I_\leq )$ the projector onto energy-eigenstates with energies below $(e-\epsilon)N +\delta\sqrt{N}$ and let $\mc U$ denote a unital channel. 	In the following, we write $I$ instead of $I^{(N)}$ to simplify notation. Then	
		\begin{align}
			p(w>\epsilon) & \leq \tr\left(P_{\rm S}(I_\leq) \mc U [\Omega_{\rm S}(I)]\right)                                        \\ 
			              & = \sum_{E_i \in I} \frac{1}{g(I)} \tr\left( P_{\rm S}(I_\leq) \mc U\left[ \proj{E_i}\right]\right)      \\
			              & \leq \frac{1}{g(I)} \tr\left(  P_{\rm S}(I_\leq)\mc U\left[\one\right]\right) = \frac{g(I_\leq)}{g(I)}. 
		\end{align}
		Writing $\tilde e := e + \delta N^{-1/2}$, we have		
		\begin{align}
			g(I) & = \e^{N \mu(\tilde e) - o(N)} - \e^{N \mu(e) - o(N)}                             \\
			     & = \e^{N\mu(\tilde e)-o(N)}\left(1 - \e^{-N(\mu(\tilde e)-\mu(e)) + o(N) }\right) \\
			     & \approx \e^{N\mu(\tilde e) - o(N)},                                              
		\end{align}
		where in the last estimation we use that $\mu$ is strictly monotonic. In particular, we can estimate the exponential in the parenthesis as
		\begin{align}
			\e^{-N(\mu(\tilde e) -\mu(e))-o(N)} = O\left( \e^{-\delta \mu'(e)N^{1/2}}\right), 
		\end{align}
		where $\mu'$ denotes the derivative of $\mu$.
		Using $g(I_\leq) = \e^{N(\mu(\tilde e-\epsilon) - o(N)}$ we then find
		\begin{align}
			p(w>\epsilon) & \leq \frac{\e^{- N\left(\mu(\tilde e) - \mu(\tilde e-\epsilon)\right) +o(N)}}{1 - O(\e^{-\delta \mu'(e) \sqrt{N}})} \\
			              & \leq C\e^{- \mu'(e) \epsilon N}.                                                                                    
		\end{align}
	\end{proof}
\end{proposition}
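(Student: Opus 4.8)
\emph{Proof proposal.} The plan is to reduce the statement to a purely combinatorial estimate on the number of energy eigenstates, exploiting unitality in exactly the way one does for i.i.d.\ systems. First I would note that the event ``$w>\epsilon$'' forces the two energy measurements to differ by more than $\epsilon N$: since $W=E_i-E_f$ and the initial micro-canonical state $\Omega_{\rm S}(I)$ is supported only on energies $E_i\le eN+\delta\sqrt{N}$, any run extracting more than $\epsilon N$ of work must end with a final energy satisfying $E_f\le (e-\epsilon)N+\delta\sqrt{N}$. Hence, writing $I_\le:=(-\infty,(e-\epsilon)N+\delta\sqrt{N}]$ and denoting by $P_{\rm S}(I_\le)$ the projector onto the corresponding energy eigenspace,
\begin{align}
	p(w>\epsilon)\le \tr\!\left(P_{\rm S}(I_\le)\,\mc U[\Omega_{\rm S}(I)]\right).
\end{align}

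Next I would use unitality to remove the channel. Expanding $\Omega_{\rm S}(I)=g(I)^{-1}\sum_{E_i\in I}\proj{E_i}\le g(I)^{-1}\one$ and using that $\mc U$, being completely positive, is in particular a positive map with $\mc U[\one]=\one$, one obtains $\mc U[\Omega_{\rm S}(I)]\le g(I)^{-1}\one$, so that
\begin{align}
	p(w>\epsilon)\le \frac{1}{g(I)}\tr\!\left(P_{\rm S}(I_\le)\right)=\frac{g(I_\le)}{g(I)}.
\end{align}
At this point the problem has collapsed to bounding the ratio of the number of low-lying eigenstates to the number of eigenstates in the micro-canonical window, a quantity that no longer refers to the channel at all.

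Then I would feed in the exponential density of states \eqref{eq:densityofstates}. Writing $\tilde e:=e+\delta N^{-1/2}$, one has $g(I_\le)=\e^{N\mu(\tilde e-\epsilon)-o(N)}$, while $g(I)$ equals the difference of the cumulative counts at $\tilde e N$ and $eN$. Since $\mu$ is strictly increasing and differentiable, $N\bigl(\mu(\tilde e)-\mu(e)\bigr)\sim \mu'(e)\,\delta\sqrt{N}\to\infty$, so this difference is dominated by its upper endpoint and $g(I)=\e^{N\mu(\tilde e)-o(N)}\bigl(1-O(\e^{-c\sqrt{N}})\bigr)$ for some $c>0$. Dividing and absorbing the sub-exponential factors into an $o(N)$ term and a constant $C>0$ gives $p(w>\epsilon)\le C\,\e^{-N(\mu(\tilde e)-\mu(\tilde e-\epsilon))+o(N)}$. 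Finally, invoking concavity of the entropy density $\mu$ (thermodynamic stability), the tangent bound at $e$ gives $\mu(e)-\mu(e-\epsilon)\ge \mu'(e)\epsilon$, while differentiability makes the shift from $e$ to $\tilde e$ cost only $O(\delta\sqrt{N})=o(N)$ in the exponent, so that $N\bigl(\mu(\tilde e)-\mu(\tilde e-\epsilon)\bigr)\ge \mu'(e)\epsilon N-o(N)$, which is exactly the claimed bound.

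I expect the only genuinely delicate point to be this last step. Turning the difference $\mu(\tilde e)-\mu(\tilde e-\epsilon)$ into the clean exponent $\mu'(e)\epsilon N$ is what requires concavity of the microcanonical entropy density; for a generic strictly increasing differentiable $\mu$ one would still obtain exponential suppression, but possibly with a different (in fact larger) rate. Everything else is bookkeeping, the main thing to check being that the $\delta\sqrt{N}$ width of the window, the intrinsic $o(N)$ in \eqref{eq:densityofstates}, and the $1-O(\e^{-c\sqrt{N}})$ denominator correction each contribute only $o(N)$ to the final exponent, which they do since $\sqrt{N}=o(N)$.
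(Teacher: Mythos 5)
Your proposal follows essentially the same line of argument as the paper's own proof: bound $p(w>\epsilon)$ by the weight of $\mc U[\Omega_{\rm S}(I)]$ on the low-energy projector, replace $\Omega_{\rm S}(I)$ by $g(I)^{-1}\one$ using unitality and positivity, reduce to the counting ratio $g(I_\le)/g(I)$, and then estimate both counts via \eqref{eq:densityofstates} with the window width and the $o(N)$ corrections contributing only sub-dominantly.

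The one place where you go beyond the paper, and rightly so, is in the final step. The paper passes without comment from $\e^{-N(\mu(\tilde e)-\mu(\tilde e-\epsilon))+o(N)}$ to $C\e^{-\mu'(e)\epsilon N}$, but for a generic strictly increasing differentiable $\mu$ the secant slope $\bigl(\mu(\tilde e)-\mu(\tilde e-\epsilon)\bigr)/\epsilon$ need not be $\ge \mu'(e)-o(1)$; the mean-value theorem only gives $\mu'(\xi)$ at some interior $\xi$. You are correct that one needs concavity of the microcanonical entropy density (thermodynamic stability) to get the tangent bound $\mu(e)-\mu(e-\epsilon)\ge\mu'(e)\epsilon$, and hence the stated rate; without it one still obtains exponential suppression with rate $\mu(e)-\mu(e-\epsilon)>0$, just not in the advertised form. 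So your version is the more careful one: same structure, same estimates, but you flag and patch an assumption that the paper uses implicitly.
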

We have here used that $\mu$ is differentiable to prove this result. Similar results would follow for weaker notions of regularity of $\mu$, such as Lipschitz-continuity.
Having proven the NMW principle for generic many-body systems, let us now show how to circumvent it using catalysts.

\begin{proposition}[Overcoming NMW using catalysts]\label{prop:micro} Consider a Hamiltonian $H_S$ and a microcanonical state $\Omega_{\rm S}(I)$, with $I$ a micro-canonical energy window around energy density $e$. 
	{Suppose there exists an energy window $I_+$ with $g(I_+) = g(I)^2$.}
	Then{, for any {$0 \leq  e_- < e$}, }there exists a catalytic channel such that
	\begin{align}\label{eq:resultmicro}
		p(w \geq e-e_-) = \frac{1}{2}. 
	\end{align}
\end{proposition}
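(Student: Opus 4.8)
The plan is to exhibit, explicitly, the $\beta$-catalytic channel sketched in Eq.~\eqref{eq:proof_sketch}: one for which $\mc C(\Omega_{\rm S}(I)) = \tfrac12\proj{E_-}+\tfrac12\Omega_+$, where $\ket{E_-}$ is an energy eigenstate with $E_-\le e_-N-O(\sqrt N)$ and $\Omega_+$ is the maximally mixed state on a set $S_+$ of exactly $g(I)^2$ energy eigenstates all lying at energies above $e_-N+O(\sqrt N)$, the sets $\{E_-\}$, $I$, $S_+$ being pairwise disjoint. First I would argue that such $\ket{E_-}$ and $S_+$ exist: the hypothesis $g(I_+)=g(I)^2$ together with \eqref{eq:densityofstates} forces the top of $I_+$ to sit at an energy density strictly above $e$ (since $\mu$ is strictly increasing and $\mu(e)>0$), and hence guarantees exponentially many eigenstates at energy densities above $e_-<e$ — more than $g(I)^2$ of them for $N$ large — among which $S_+$ can be chosen; and $\ket{E_-}$ exists as soon as $e_-$ is above the ground-state energy density (e.g.\ whenever $e_-\ge 0$ and the ground-state energy is non-positive).

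Granting such a channel, the claim $p(w\ge e-e_-)=\tfrac12$ is immediate. Exactly as in the derivation of the NMW principle, the final energy measurement has marginal $P(E_f)=\langle E_f|\mc C(\Omega_{\rm S}(I))|E_f\rangle$, so $E_f=E_-$ with probability exactly $\tfrac12$ and $E_f\in S_+$ with probability $\tfrac12$. In the first case $E_i\in I$ and $E_-\le e_-N-O(\sqrt N)$ give $w=(E_i-E_-)/N\ge e-e_-$; in the second case $E_f> e_-N+O(\sqrt N)$ while $E_i\le eN+O(\sqrt N)$, so $w<e-e_-$. Thus $p(w\ge e-e_-)$ equals the probability of the first branch.

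The heart of the matter is the construction of $\mc C$. Write $g:=g(I)$. I would take the catalyst $C$, with trivial Hamiltonian $H_C=\mathbf I$, to be $(g{+}1)$-dimensional with initial state $\sigma_C=\tfrac12\proj{c_0}+\tfrac1{2g}\sum_{k=1}^{g}\proj{c_k}$ (so $[\sigma_C,H_C]=0$). Then $\Omega_{\rm S}(I)\otimes\sigma_C$ is diagonal, with $g$ eigenvalues equal to $\tfrac1{2g}$ spanning $V_0:=\mathrm{span}\{\ket{E_i}\!\otimes\!\ket{c_0}\}_{E_i\in I}$ and $g^2$ eigenvalues equal to $\tfrac1{2g^2}$ spanning $V_1:=\mathrm{span}\{\ket{E_i}\!\otimes\!\ket{c_k}\}_{E_i\in I,\,1\le k\le g}$. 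I choose $U$ to be any permutation of the joint (energy)$\otimes$(catalyst) product basis that maps $V_0$ bijectively onto $W_0:=\mathrm{span}\{\ket{E_-}\!\otimes\!\ket{c_k}\}_{1\le k\le g}$ and $V_1$ bijectively onto $W_1:=\mathrm{span}\{\ket{E^+_j}\!\otimes\!\ket{c_0}\}_{E^+_j\in S_+}$, acting arbitrarily (say as the identity) elsewhere; such a $U$ exists because it preserves the multiset of eigenvalues, as $\dim V_0=\dim W_0=g$ and $\dim V_1=\dim W_1=g^2$. Then $U(\Omega_{\rm S}(I)\otimes\sigma_C)U^\dagger=\tfrac1{2g}\Pi_{W_0}+\tfrac1{2g^2}\Pi_{W_1}$, whose partial traces are computed directly: $\tr_C$ gives $\tfrac1{2g}\cdot g\proj{E_-}+\tfrac1{2g^2}\sum_{E^+_j\in S_+}\proj{E^+_j}=\tfrac12\proj{E_-}+\tfrac12\Omega_+$, and $\tr_S$ gives $\tfrac1{2g}\sum_{k=1}^{g}\proj{c_k}+\tfrac1{2g^2}\cdot g^2\proj{c_0}=\sigma_C$. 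The last identity is the micro-canonical catalytic condition, so $\mc C(\cdot):=\tr_C(U(\,\cdot\,\otimes\sigma_C)U^\dagger)$ is the desired $\beta$-catalytic channel.

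The step I expect to be the main obstacle is finding this matching of catalyst to target — in particular, recognizing that the ``fail'' state must be maximally mixed on a space of dimension exactly $g(I)^2$ (which is precisely why the hypothesis is imposed), and that the catalyst should carry one large eigenvalue $\tfrac12$ whose corresponding column, in the final joint state, absorbs the entire fail-branch weight, while its remaining $g$ eigenvalues $\tfrac1{2g}$ are exactly what is needed to build the weight-$\tfrac12$ success peak at $\ket{E_-}$. Once the matching is in place, checking equality of the eigenvalue multisets and computing the two partial traces is routine. A secondary, purely technical point — guaranteeing that $\{E_-\}$, $I$ and $S_+$ are disjoint subsets of the spectrum and that the energy windows are narrow compared to the relevant energy differences — is handled by the density-of-states assumption for $N$ large, passing if necessary to a high-energy subset of $g(I)^2$ eigenstates.
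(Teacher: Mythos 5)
Your construction is essentially identical to the paper's: a $(g{+}1)$-dimensional catalyst (one level of mass $\tfrac12$, $g$ levels of mass $\tfrac1{2g}$, $H_C \propto \mathbf I$), and a permutation unitary that maps the heavy-column block $P_{\rm S}(I)\otimes\proj{c_0}$ onto the row $\proj{E_-}\otimes\sum_k\proj{c_k}$ and the light-column block $P_{\rm S}(I)\otimes\sum_k\proj{c_k}$ onto $P_{\rm S}(I_+)\otimes\proj{c_0}$, after which both the catalyticity check and $\mc C(\Omega_{\rm S}(I))=\tfrac12\proj{E_-}+\tfrac12\Omega_{\rm S}(I_+)$ follow by the same partial-trace computation. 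The only (harmless) difference is that you add a small buffer in the choice of $E_-$ and $S_+$ to handle the $O(\sqrt N)$ width of $I$ explicitly; also note one minor slip — you cannot take $U$ ``as the identity'' on the complement of $V_0\cup V_1$, since $W_0\cup W_1$ is not contained in $V_0\cup V_1$, but any permutation of the remaining product-basis vectors works, as you correctly observe is all that's needed.
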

Before giving the proof of the proposition, let us emphasize again that the required conditions on the Hamiltonian are very weak.
In particular, the conditions are (approximately) fulfilled if the density of states is well approximated by an exponential in the range of energies that we are working in, a condition that is typically fulfilled in many-body systems and, as we have seen above, leads to an NMW principle if we do not allow for catalysts.
\begin{figure*}[htbp]
	\centering
	\includegraphics[width=0.5\textwidth]{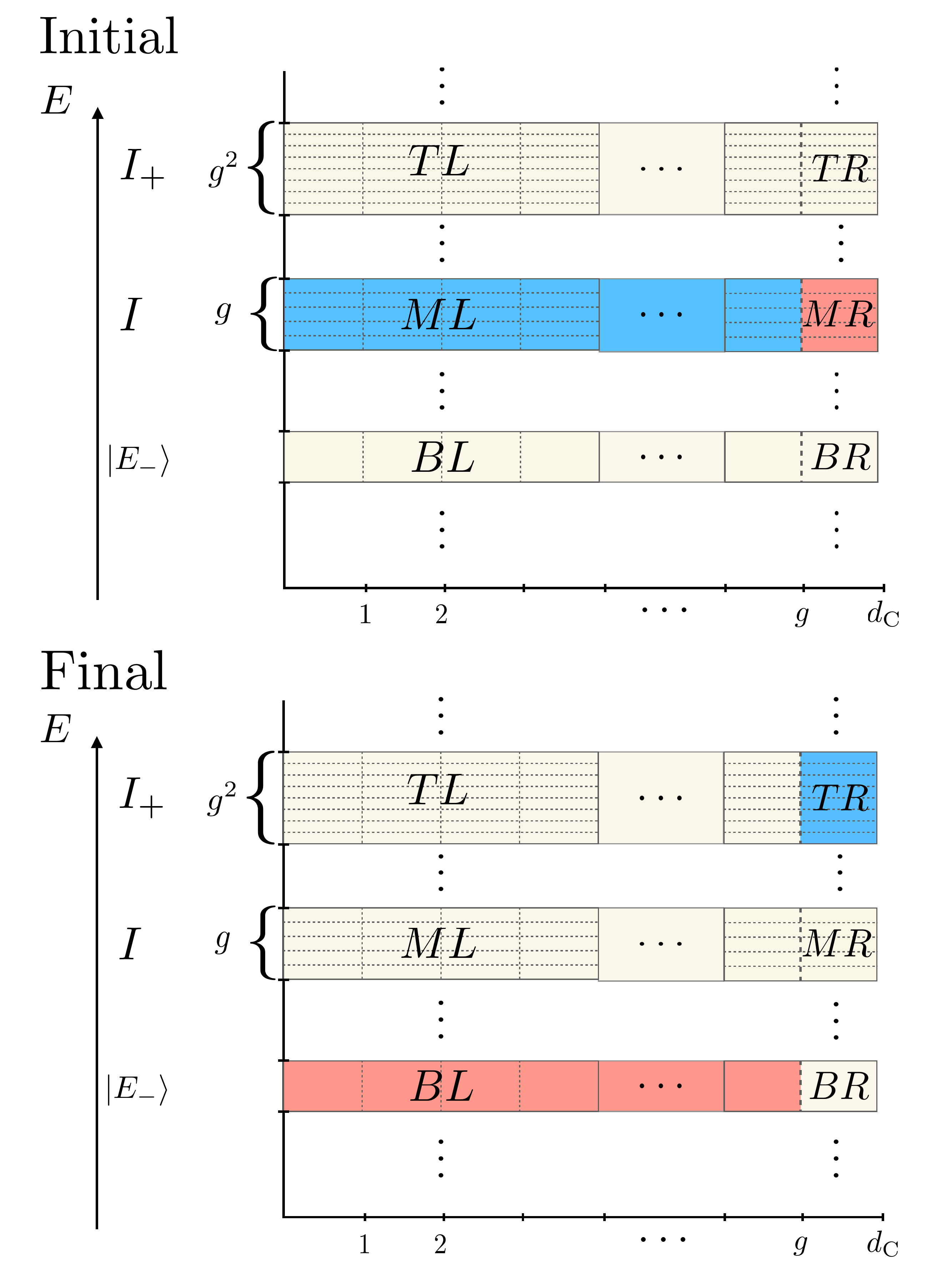}
	\caption{Proof sketch for Proposition~\ref{prop:micro}: \emph{Top:} We represent the initial product state of system and catalyst by means of a table, using the fact that both are initially diagonal in the energy eigenbasis: Ordering the spectra of both states non-increasingly, the entry $(i,j)$ of the table corresponds to the product of the $i$-th eigenvalue of the system (corresponding to the a particular energy eigenstate) and the $j$-th energy eigenvalue of the catalyst. 
		We focus on three regions in the table---denoted top (T), middle (M), bottom (B)---corresponding to {two degeneracy bands $I$ and ${I_+}$ and (the projector onto) a single energy eigenvector $\ket{E_-}$}: Since the system is initially in the micro-canonical ensemble with energy window $I$, the support of the joint state is initially contained in the coloured middle band. 
		The catalyst is constructed as carrying half of its mass uniformly distributed over $d_{\rm C}-1$ of its entries and the other half in a single entry. 
		This means that the middle band is divided into two subregions, middle left (ML) and middle right (MR), where the total probability mass coloured in blue equals the mass coloured in red. 
		Furthermore, each of these subregions has its mass uniformly distributed over its entries. 
		\emph{Bottom:} By construction, both the subregions BL and MR as well as ML and TR have the same number of entries. 
		Hence, we can swap BL and MR by means of a permutation, and similarly for ML and TR. 
		This permutation results in a reduced state on $ S $ of the form Eq.~\eqref{eq:extracted_work_state} and hence produces the claimed work extraction probability. 
	Moreover, it leaves the marginal state of the catalyst unchanged, so that the permutation induces a valid catalytic channel.}
	\label{fig:micro} 
\end{figure*}

\begin{proof}
	{A sketch of the proof is given in Fig.~\ref{fig:micro}. The proof is constructive in the sense that we provide an explicit catalyst and unitary. 
		We first introduce some useful notation. {Define $g := g(I)$, $g_+:= g(I_+)=g^2$ and let $P_{\rm S}(I)$ and $P_{\rm S}(I_+)$ be the projectors onto the corresponding energy subspaces}.
		Let $\ket{E_-}$ be \emph{any} eigenstate of the Hamiltonian such that {$0 \leq E_-/N = e_- \leq e$}. 
		Following this notation, the initial state of the system is
		\begin{align}
			\Omega_{\rm S}(I) = \frac{1}{g} P_{\rm S}(I). 
		\end{align}
		The aim is to bring the system to a state that is an equal mixture of {$\proj{E_-}$ and $\Omega(I_+)$}.
		To do this, we employ a catalyst of dimension $d_{\rm C} = g+1$. {Let $\{\ket{i}_{\rm C}\}_{i=1}^{d_{\rm C}}$ be an arbitrary orthonormal basis on the Hilbert-space of the catalyst and let $P_{\rm C}=\sum_{i=1}^g \proj{i}$. The initial state on the catalyst is given by
			\begin{align}
				\sigma = \frac{1}{2g} P_{\rm C} + \frac{1}{2}\proj{d_{\rm C}}_{\rm C}. 
			\end{align}
	}}
	{We define the unitary $U$ by the conditions
		\begin{align}
			U [P_{\rm S}(I)\otimes \proj{d_{\rm C}}_{\rm C}] U^\dagger & = \proj{E_-} \otimes P_{\rm C}                     \\
			U [P_{\rm S}(I)\otimes P_{\rm C}] U^\dagger                & = P_{\rm S}(I_+) \otimes \proj{d_{\rm C}}_{\rm C}. 
		\end{align}
		This is possible since i) the corresponding subspaces have the same dimension, ii) the subspaces in the two equations are orthogonal and iii) subspaces of the same dimension can always be mapped into each other by a unitary. In fact there will be many different unitaries achieving this, and any of them is fine for our purposes.
				
		Applying $ U $ to the state $\Omega_{\rm S}(I)\otimes \sigma_{\rm C}$ one obtains
		\begin{widetext}
			\begin{align}
				\label{eq:effect_unitary1}U \left(\Omega_{\rm S}(I) \otimes \sigma_{\rm C} \right)U^{\dagger} & =\frac{1}{2g^2} U \left( P_{\rm S}(I) \otimes  P_{\rm C} \right) U^{\dagger}+ \frac{1}{2g} U \left( P_{\rm S}(I) \otimes\frac{1}{2}\proj{d_{\rm C}}_{\rm C} \right) U^{\dagger} \\
				\label{eq:effect_unitary2}                                                                    & =\frac{1}{2} \Omega_{\rm S}(I_+) \otimes \proj{d_{\rm C}}_{\rm C}+ \frac{1}{2g} \proj{E_-}\otimes P_{\rm C}.                                                                    
			\end{align}
		\end{widetext}
		{It is clear from~\eqref{eq:effect_unitary2} that 
			\begin{align}
				\tr_{\rm S}(U (\Omega_{\rm S}(I)\otimes \sigma_{\rm C}) U^{\dagger})=\sigma_{\rm C}, 
			\end{align} 
			as required for a catalytic channel. Moreover, the quantity of interest $ P(w\geq e-e_-) $ given by this channel $ \mathcal{C} $ (defined by $ U $ and $ \sigma_{\rm C} $) can be derived by noting that
			\begin{align} \label{eq:extracted_work_state}
				\mathcal{C}(\Omega_{\rm S}(I))=\frac{1}{2}\Omega(I_+) + \frac{1}{2} \proj{E_-}, 
			\end{align}
			so that 
			$     p(W \geq e- E'/n) = \frac{1}{2} $.}}
\end{proof}

\section{Proof of Proposition~1 in the main text}
\label{app:canonical}
In this section, we provide the proof and full statement of Proposition~1 in the main text. 
This proof is very similar to that of the micro-canonical case presented in the previous section, we will hence only describe the adjustments that have to be made. Also, unlike in Appendix~\ref{app:microcanonical}, we now again consider the standard setting and definition of catalytic channels as introduced in the main text.
In the following, we denote by $P_{\rm S}(I)$ the projector onto a specific energy-window $I$. Then $g(I)$ is equal to the rank of $P_{\rm S}(I)$.
We consider Hamiltonians $H_S^{(N)}$ on a regular lattice $\Lambda^{(N)}$ of $N$ sites and assume that the $H_S^{(N)}$ (for different values of $N$) constitute a sequence of \emph{local}, \emph{uniformly bounded} Hamiltonians:
\begin{align}
	H_S^{(N)} = \sum_{x \in \Lambda^{(N)}} h_x, 
\end{align}
where each term $h_x$ acts on sites at most a distance $l$ away from $x$ and the norm of each term is bounded as $\norm{h_x}\leq h$ independent of the system size for some constant $h$.

\begin{proposition}[Lower bound to the probability of work extraction]\label{prop:gibbs}
	Fix an inverse temperature $\beta>0$ and consider a sequence of local, uniformly bounded $N$-particle Hamiltonians $H_S^{(N)}$ on a regular, $D$-dimensional lattice. Assume that the states $\omega_{\beta}(H_S^{(N)})$ have a finite correlation length bounded by a constant and denote by $e^{(N)}$ the energy density corresponding to $\beta$. Let $\delta>0$ be fixed and consider $I^{(N)} := [e^{(N)} N-\delta\sqrt{N},e^{(N)} N]$. 
	Further assume {that} there exist micro-canonical energy windows $I^{(N)}_+$ with $g(I^{(N)}_+) = g(I^{(N)})^2$.
	Then, for sufficiently large $N$, there exists, 
	for any $0 < e_- < e^{(N)}$, a corresponding sequence of catalytic channels such that 
	\begin{align}
		p(w\geq e^{(N)} - e_-) \geq 1/2 - C \e^{-\frac{\left(\delta^2 N\right)^{\frac{1}{1+D}}}{K}}, 
	\end{align}
	where $C,K>0$ are constants. 
\end{proposition}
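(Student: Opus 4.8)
The plan is to adapt the constructive argument of the micro-canonical case (Proposition~\ref{prop:micro}) to the canonical setting, where the only substantive new ingredient is that the initial Gibbs state $\omega_\beta(H_S^{(N)})$ is not flat on the typical energy window $I^{(N)}$, but has slightly non-uniform eigenvalues. First I would invoke Anshu's concentration bound, exactly as in the proof of Lemma~\ref{lemma:NMW-GP}, to assert that the weight of $\omega_\beta(H_S^{(N)})$ outside $I^{(N)} = [e^{(N)}N - \delta\sqrt N, e^{(N)}N]$ is at most $C\e^{-(\delta^2 N)^{1/(1+D)}/K}$. Write $\omega_\beta = \lambda\, \omega^{\mathrm{in}} + (1-\lambda)\,\omega^{\mathrm{out}}$, where $\omega^{\mathrm{in}}$ is the normalized restriction of $\omega_\beta$ to the subspace $P_{\rm S}(I^{(N)})$ and $\lambda = \tr(\omega_\beta P_{\rm S}(I^{(N)})) \geq 1 - C\e^{-(\delta^2 N)^{1/(1+D)}/K}$.

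The core construction then proceeds by building a catalyst of dimension $d_{\rm C} = g+1$, with $g = g(I^{(N)})$, carrying half its mass in a single level $\ket{d_{\rm C}}$ and the other half spread over $g$ levels, and defining a unitary $U$ by the same two subspace-mapping conditions as in Proposition~\ref{prop:micro}: it sends $P_{\rm S}(I^{(N)}) \otimes \proj{d_{\rm C}}$ onto $\proj{E_-}\otimes P_{\rm C}$ and $P_{\rm S}(I^{(N)})\otimes P_{\rm C}$ onto $P_{\rm S}(I^{(N)}_+)\otimes\proj{d_{\rm C}}$, which is possible by the dimension-matching hypothesis $g(I^{(N)}_+) = g^2$. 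The subtlety is that $U$ acting on $\omega^{\mathrm{in}}\otimes\sigma_{\rm C}$ does not leave $\sigma_{\rm C}$ exactly invariant, because $\omega^{\mathrm{in}}$ is not exactly the flat state $\Omega_{\rm S}(I^{(N)})$; likewise the action on the full $\omega_\beta\otimes\sigma_{\rm C}$ carries an error term from $\omega^{\mathrm{out}}$. I would handle this by noting $\|\omega_\beta - \Omega_{\rm S}(I^{(N)})\|_1$ is small (dominated by $1-\lambda$ up to the flatness of $\omega_\beta$ inside the window, which is controlled since the window has width $O(\sqrt N)$ so eigenvalue ratios across it are $1 + o(1)$), hence $\mc C(\omega_\beta) \approx_\delta \tfrac12\Omega(I^{(N)}_+) + \tfrac12\proj{E_-}$ in trace distance with error $\le C'\e^{-(\delta^2 N)^{1/(1+D)}/K}$, by the data-processing inequality applied to the channel $\mc C(\cdot) = \tr_{\rm C}(U(\cdot\otimes\sigma_{\rm C})U^\dagger)$.

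Finally, to obtain a \emph{genuine} $\beta$-catalytic channel (so that the catalytic constraint \eqref{eq:catalytic_condition} holds exactly, not just approximately), I would either (i) argue that one can slightly perturb the unitary or adjust $\sigma_{\rm C}$ to absorb the discrepancy while changing $\mc C(\omega_\beta)$ by only $O(1-\lambda)$ in trace distance, or (ii) more cleanly, redefine the target: since the middle band in the canonical picture has non-uniform row-sums, split it into equal-mass blue and red regions adapted to the actual eigenvalue profile of $\omega_\beta$ (as in Fig.~\ref{fig:micro} but with unequal cell sizes), so that the permutation exactly preserves the catalyst marginal by construction. The work-extraction probability is then read off: conditioned on the initial energy lying in $I^{(N)}$ (probability $\lambda$), the protocol prepares $\proj{E_-}$ with conditional probability $1/2$, and the final measured energy per particle is within $\delta/\sqrt N$ of $e^{(N)}$ on that branch, so for $\epsilon = e^{(N)} - e_-$ we get $p(w \geq \epsilon) \geq \tfrac12\lambda \geq \tfrac12 - C\e^{-(\delta^2 N)^{1/(1+D)}/K}$. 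The main obstacle is step~(ii)/(i): making the catalyst marginal \emph{exactly} invariant under a genuine unitary despite the non-flatness of the Gibbs state, rather than settling for an approximate catalyst; everything else is a direct transcription of the micro-canonical proof combined with the concentration bound already used in Lemma~\ref{lemma:NMW-GP}.
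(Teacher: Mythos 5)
Your proposal correctly identifies the crux---an \emph{exact} catalyst must exist, not merely an approximate one---but two of your intermediate steps fail. First, the flatness claim is false: over the window $I^{(N)}$ of width $\delta\sqrt{N}$, the Boltzmann weights range over a factor $\e^{\beta\delta\sqrt{N}}$, which diverges as $N\to\infty$ rather than being $1+o(1)$; consequently $\nnorm{\omega_\beta(H) - \Omega_{\rm S}(I^{(N)})}_1$ is $O(1)$, not small, and the data-processing step you invoke does not control the error you want it to. Second, your two proposed fixes are not actually carried out: route~(i) (``perturb the unitary or adjust $\sigma_C$'') is left as a hand-wave, and route~(ii) hits a discretization obstruction---the cell sizes in the permutation picture must be integers since a unitary permutes whole orthonormal subspaces, so you cannot continuously adapt block sizes to match the non-uniform Gibbs eigenvalue profile.

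The paper closes the gap in a way you gesture at but do not crystallize: keep the \emph{same} unitary $U$ as in the micro-canonical construction, then invoke a Brouwer fixed-point argument (Lemma~\ref{lemma:catfromU}) for the channel $\sigma_C \mapsto \tr_S\bigl(U(\omega_\beta(H)\otimes\sigma_C)U^\dagger\bigr)$ to conclude that an exact catalyst $\sigma_C$ exists---without ever having to construct or approximate it. One then extracts a single scalar constraint on that catalyst's spectrum: balancing the in- and out-flow of probability mass through the column indexed by $\ketn{d_C}$ forces
\begin{equation}
  q_{d_C} \;=\; \frac{r(I)+r_-}{\,2r(I)+r_-+r(I_+)\,},
\end{equation}
and Anshu's concentration bound gives $r(I) \geq 1-\epsilon_N$, whence $q_{d_C} \geq \tfrac12(1-\epsilon_N)$. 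The success probability is simply $q_{d_C}\cdot r(I)$, yielding the claimed bound directly. No trace-distance comparison between $\omega_\beta(H)$ and the flat micro-canonical state is needed anywhere, which is precisely why the argument survives the non-flatness of the Gibbs spectrum inside the typical window.
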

Before giving the proof, we again emphasize the weakness of the {assumptions in the statement}, which, in the limit of large $N$, can be satisfied to arbitrary precision if the density of states grows exponentially within $I^{(N)}$, as is typically the case. Furthermore, let us emphasize that the energy densities $e^{(N)}$ fluctuate arbitrarily little (for sufficiently large $N$) from a constant $e$ due to the locality of temperature \cite{Kliesch2014}.
\begin{proof}
	The proof follows the proof for the micro-canonical case in Appendix~\ref{app:microcanonical}. In particular, the unitary that we use is exactly the same as that constructed in the proof for the micro-canonical case.
	However, here we do not construct the state of the catalyst explicitly, but allude to Lemma~\ref{lemma:catfromU}, which ensures there is always some catalyst given the unitary that we consider.
	What remains to be done is to show that for every such catalyst the probability distribution of work is as claimed. 
	To do this, we denote by $r$ the initial probability of an energy-window $I$ in the initial thermal state given by
	\begin{align}
		{r(I)} = \tr(P_{\rm S}(I) \omega_\beta(H_{\rm S})) 
	\end{align}
	{and by $r_- = \bra{E_-}\omega_\beta(H_{\rm S}) \ket{E_-}$ the initial weight on the low-energy eigenstate $\ket{E_-}$.} Here and in the following, we drop the explicit dependence on the system-size for simplicity of notation. The following arguments work as long as $N$ is large enough such that the energy-windows $I$ and $I_+$are disjoint. 
	Denote by $\{q_i\}_{i=1}^{d_C}$ the spectrum of the catalyst.
	By considering the action of the used unitary, it is easy to see that a necessary condition for the transition being catalytic under the given unitary is that
	{	\begin{align} \label{eq:constraint_column}
		q_{d_C}\left(r(I) + r(I_+)\right) = (1-q_{d_C})(r(I)+r_-).
		\end{align}}
	This can be seen, for example, from Fig.~\ref{fig:micro}, where the above represents the condition of catalyticity for the right-most column. 
	Solving in \eqref{eq:constraint_column} for $q_{d_C}$, we find that
	{	\begin{align}
		q_{d_C} = \frac{r(I)+r_-}{2r(I) + r_- + r(I_+)}.
		\end{align}}
	We now invoke the result from Ref.\ \cite{Anshu2016} (as previously in 
	the proof of Lemma~\ref{lemma:NMW-GP}) which implies that 
	\begin{align}
		{ r(I)} \geq  1 - \epsilon_N, 
	\end{align}
	where there exist constants $ C,K $ such that
	\begin{align}
		\epsilon_N \leq C e{-\frac{\left(\delta^2 N\right)^{\frac{1}{1+D}}}{K}} . 
	\end{align}
	For large enough $N$, the energy windows $I$ and $I_+$ are disjoint. Hence $0\leq {r_- + r(I_+) \leq 1 - r(I)}$ and  we find {
		\begin{align}
			q_{d_C} & \geq \frac{r(I)}{2r(I) + 1- r(I)} = \frac{r(I)}{1+r(I)}        \\
			        & \geq \frac{r(I)}{2} \geq \frac{1}{2}\left(1-\epsilon_N\right). 
		\end{align}}
	Finally, we find
	\begin{align}
		p(w\geq e-e_-) & \geq P({E_f = E_-} | E_i \in I)w(I) = q_{d_c} \cdot {r(I)} \nonumber \\
		               & \geq\frac{1}{2}(1-\epsilon_N)^2 \geq \frac{1}{2} - \epsilon_N.       
	\end{align}
\end{proof}

\begin{lemma}[Existence of catalysts]\label{lemma:catfromU} Let $\rho_{\rm S}$ be a quantum state on a finite-dimensional Hilbert-space $\mc H_{\rm S}$ and $U$ be a unitary on the Hilbert-space $\mc H_{\rm S}\otimes \mc H_{\rm C}$, where $\mc H_{\rm C}$ is an arbitrary finite-dimensional Hilbert-space. Then there exists a density matrix $\sigma_{\rm C}$ such that
	\begin{align}
		\tr_{\rm S}\left(U ( \rho_{\rm S}\otimes \sigma_{\rm C} ) U^\dagger\right) = \sigma_{\rm C}. 
	\end{align}
\end{lemma}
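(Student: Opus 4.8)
The plan is to realize $\sigma_{\rm C}$ as a fixed point of a suitable continuous self-map on the compact convex set of density matrices on $\mc H_{\rm C}$, and then apply Brouwer's fixed-point theorem. Concretely, define the map
\begin{align}
	\Phi \colon \mc D(\mc H_{\rm C}) \to \mc D(\mc H_{\rm C}), \qquad \Phi(\sigma) := \tr_{\rm S}\!\left(U(\rho_{\rm S}\otimes \sigma)U^\dagger\right).
\end{align}
First I would check that $\Phi$ is well-defined: for any density matrix $\sigma$, the operator $U(\rho_{\rm S}\otimes\sigma)U^\dagger$ is again a density matrix on $\mc H_{\rm S}\otimes\mc H_{\rm C}$, and the partial trace of a density matrix is a density matrix, so $\Phi(\sigma)\in\mc D(\mc H_{\rm C})$. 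Next, $\Phi$ is affine (hence continuous): $\sigma\mapsto \rho_{\rm S}\otimes\sigma$ is linear, conjugation by $U$ is linear, and the partial trace is linear, so $\Phi$ is the restriction of a linear map to $\mc D(\mc H_{\rm C})$.

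The set $\mc D(\mc H_{\rm C})$ of density matrices on the finite-dimensional space $\mc H_{\rm C}$ is nonempty, convex, and compact (it is a closed bounded subset of the real vector space of Hermitian operators). Brouwer's fixed-point theorem therefore applies to the continuous self-map $\Phi$, yielding a $\sigma_{\rm C}\in\mc D(\mc H_{\rm C})$ with $\Phi(\sigma_{\rm C})=\sigma_{\rm C}$, which is exactly the claimed identity $\tr_{\rm S}(U(\rho_{\rm S}\otimes\sigma_{\rm C})U^\dagger)=\sigma_{\rm C}$.

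There is no real obstacle here — the only points requiring a line of care are the verification that $\Phi$ genuinely maps $\mc D(\mc H_{\rm C})$ into itself (trace preservation and positivity are both immediate from the structure above) and invoking the correct hypotheses of Brouwer (continuity, and compact convex nonempty domain), all of which hold because everything is finite-dimensional. One could alternatively phrase the same argument via the Schauder or Markov–Kakutani fixed-point theorem, but Brouwer suffices and is the cleanest. I would also remark that the catalyst need not be unique: any fixed point of $\Phi$ works, and in the applications (e.g. the proof of Proposition~\ref{prop:gibbs}) it is this mere existence, together with the combinatorial constraints on the spectrum of $\sigma_{\rm C}$ forced by the chosen $U$, that is exploited.
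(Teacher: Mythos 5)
Your proposal is correct and takes essentially the same route as the paper: the paper also observes that $\sigma\mapsto\tr_{\rm S}(U(\rho_{\rm S}\otimes\sigma)U^\dagger)$ is a quantum channel, hence a continuous self-map of the compact convex set of states, and invokes Brouwer's fixed-point theorem. You merely spell out the well-definedness and continuity checks in more detail.
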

\begin{proof}
	The map $\sigma_{\rm C} \mapsto \tr_{\rm S}\left(U (\rho_{\rm S}\otimes \sigma_{\rm C}) U^\dagger\right)$ specifies a quantum-channel. Since every quantum channel is a continuous map on the compact and convex set of states, it has a fixed point by Brouwer's fixed point theorem (\cite{watrous2018theory}, Section 4.2.2). 
\end{proof}

\section{Proof of Proposition~3 in the main text}
\label{app:many_player_strategies}
\begin{figure}[t]
	\centering
	\includegraphics[width=0.5\textwidth]{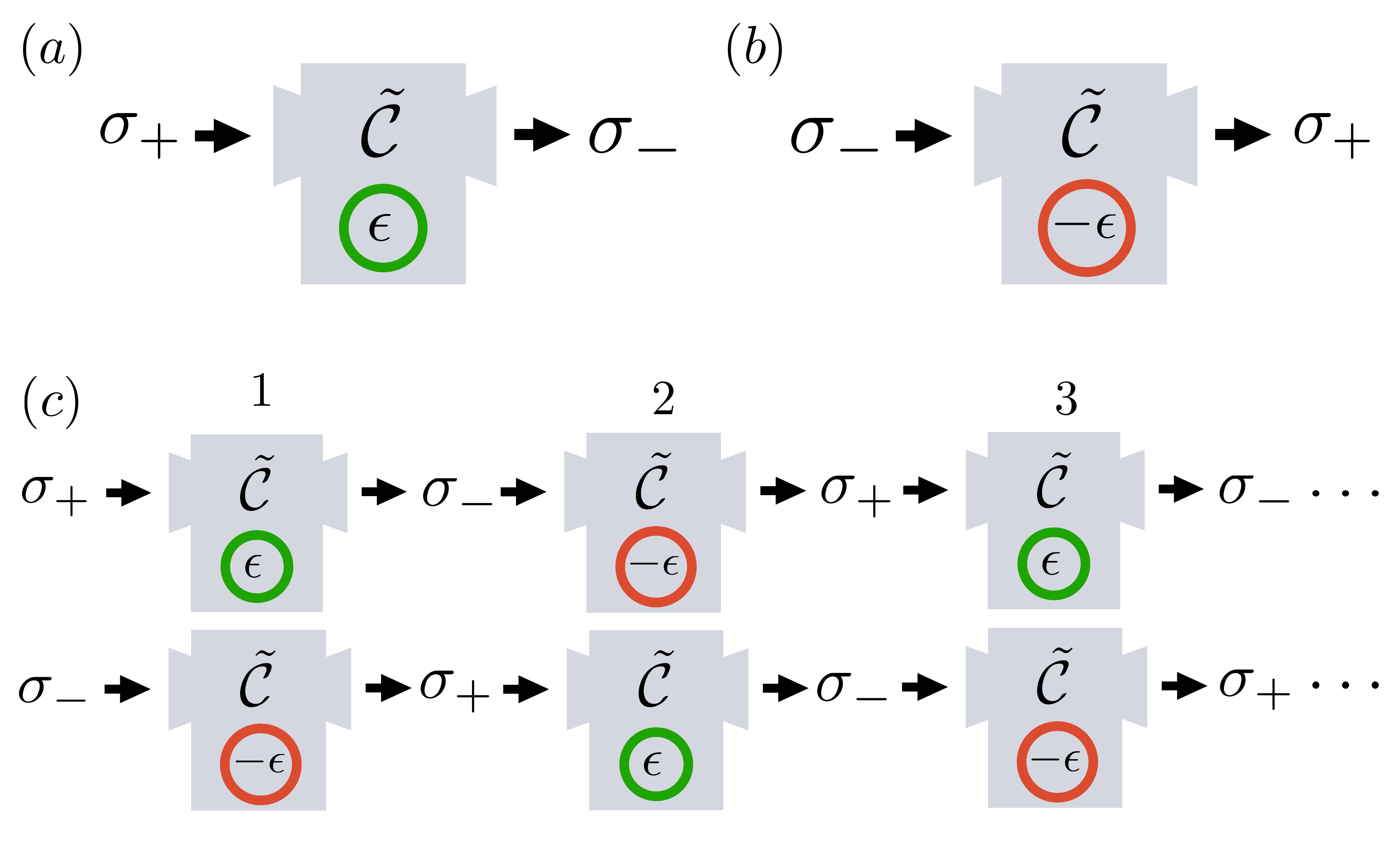}
	\caption{The idea behind the proof of Proposition~3 in the main text: For any choice of unitary, we can understand the second condition in the Def.~2 of the main text as the definition of a quantum channel $\tilde{\mc C}$ acting on $C$. We find a $\tilde{\mc C}$ and two states $\sigma_-, \sigma_+$ with the following properties: (a) If the initial state of the catalyst is $\sigma_+$, the result of running the standard protocol is to extract positive work $\epsilon$ from the system, while the state of the catalyst is changed to $\tilde{\mc C}(\sigma_+) = \sigma_- $. 
		(b) The \emph{same} unitary, however, for initial state $\sigma_-$, extracts negative work $- \epsilon$ and changes the catalyst state to $\tilde{\mc C}(\sigma_-) = \sigma_+$. 
		(c) Hence, if we initialize the catalyst in the state $\sigma = \frac{1}{2}(\sigma_+ + \sigma_-)$, then there are two ``branches'' of  work extraction distributions, each occurring with probability $1/2$, while the resulting channel on $S_i$ is catalytic for every $i$.
		Note that, if the agent \emph{knew} whether her input state was $\sigma_+$ or $\sigma_-$, then she could condition her unitary $U$ on this knowledge and achieve the claimed work distribution easily. Hence, the key achievement of the proof is to show that agents can achieve correlated work distributions \emph{without knowing the initial state of the catalyst.}
	}
	\label{fig:multi}
\end{figure}

Proposition~3 in the main text follows straightforwardly once we realize that we can tune the process used in the construction of the proof for Proposition~1 in the main text in such a way that its repeated application implies the claimed work distribution. This follows because we have great freedom in choosing the state $E_-$. In particular, in terms of notation of the previous section, let $e_+^{(N)}$ denote the energy density around which the window $I_+^{(N)}$ is centered. Then we choose $E_-$ in such a way that $e - E_-/N = e_+ - e$ to ensure that the extracted and invested amount of work in every iteration are exactly the same. The above choice of $E_-$ is always possible for the Hamiltonians with exponentially growing density of states that we consider (for which $e_+$ will not be much greater than $e$.)

Fig.~\ref{fig:multi} provides a sketch of the proof. For the many-player process described in the main text, let 
\begin{align}
	p(w_2, w_3, w_4, \dots | w_1) 
\end{align}
denote the work probability distribution for agents $2$ to $n$ conditional on the player $1$ extracting work $w_1$. The key recognition then is that, for any $n$, by construction of the catalytic channel, 
\begin{equation} \label{eq:maximal_correlation}
	p(w_2, w_3, \dots | w_1) = 1
\end{equation}
whenever $w_{i} = - w_{i-1}$ for all $i \in \{2, \dots, n\}$, while 
\begin{equation}
	p(w_2, w_3, \dots | w_1) = 0
\end{equation}
in all other cases.
This is because, if the extracted work in the first round was negative, corresponding to an increase in the system's energy, then by construction of the unitary, the final state of the catalyst is $\sigma' = \proj{d}$ with probability one, since all transitions that lead to an increase in energy on the system result in this final state. This, in turn, is sufficient to determine that, for the second player, the application of the unitary to this catalyst state $\sigma'$ and her copy of $\omega_\beta(H)$ will result in a decrease of the system's energy (and hence positive work extraction) and a final catalyst state $\sigma''$ with support on the subspace $\sum_{i}^g \proj{i}$, etc. This reasoning can be applied to an arbitrary number of agents and also to the case in which the extracted work in the first round was positive, and hence implies \eqref{eq:maximal_correlation}. The claimed work distributions then follow from 
\begin{align}
	p(w_1, w_2, \dots, w_n) = p(w_2, w_3, w_4, \dots | w_1) p(w_1), 
\end{align}
together with Proposition~1 in the main text.
We also note that a similar conclusion holds in the case of the microscopic toy-example presented in Section~\ref{app:microscopic_toy_example}, where this behaviour can be checked easily by explicit calculation.

\section{Proof of Proposition~2 in the main text} 
\label{app:lower_bound_on_catalyst_s_dimension}

Given a catalytic channel $\mc C$, let $\mc U$ denote the unitary channel applied to the joint system $SC$ when dilating the channel. The key observation is that, if $\mathcal U$ is unitary, then $\mathcal U^*$ is trace-preserving and hence maps quantum states to quantum states (in fact, this property holds for all unital channels). Here, $^*$ denotes the Hilbert-Schmidt adjoint. We then write
\begin{align}
	\langle \e^{\beta W}\rangle & = \Tr\left(\omega \mathcal{C}(\mathbf 1)\right)                                                               \\
	                            & = \Tr\left(\omega\otimes \mathbf{I} \mathcal{U}(\mathbf 1 \otimes \sigma)\right)                              \\
	                            & = d_C \Tr\left(\mathcal{U}^*\left(\omega\otimes \frac{\mathbf{I}}{d_C}\right) \mathbf 1 \otimes \sigma\right) \\
	                            & \leq d_C \norm{\mathbf 1\otimes \sigma}_\infty \norm{\frac{\mathbf 1}{d_C}\otimes \sigma}_1                   \\
	                            & = d_C \norm{\sigma}_\infty.                                                                                   
\end{align}
Here, the first equality is simply Eqn.~2 in the main text and we write $\omega$ instead of $\omega_\beta(H)$.
Similarly, we get
\begin{align}
	\langle \e^{\beta W}\rangle & = d\Tr\left(( \omega\otimes \mathbf{I} )\mathcal{U}\left(\frac{\mathbf 1}{d} \otimes \sigma\right)\right) \\
	                            & \leq d \norm{\omega \otimes \mathbf{I}}_\infty = d \norm{\omega}_\infty.                                  
\end{align}

\section{Non-trivial Hamiltonian on the catalyst}\label{sec:nontrivialH}
In this section we show that the probability distribution of work done on the system is independent of the Hamiltonian on the catalyst.
To do this, let us first assume we had a catalytic process that uses a catalyst with a non-trivial 
Hamiltonian $H_C$ and a quasi-classical state $\sigma_C$, i.e., $[H_C,\sigma_C]=0$. We assume that $\sigma_C$ is quasi-classical, since it is well known that it is impossible to associate a meaningful random variable of work in the case coherent initial states \cite{perarnau2017no}.
Using the two-time measurement process on the system and catalyst together, we can then associate a bi-partite work-distribution $P(W^{(S)},W^{(C)})$, where 
\begin{equation}
	W^{(S)}=E_f^{(S)} - E_i^{(S)}
\end{equation}
denotes the work done on the system and 
\begin{equation}
	W^{(C)}=E_f^{(C)}-E_i^{(C)}
\end{equation} the work done on the catalyst. 
The work distribution on the system is simply given by the marginal
\begin{align}
	P\left(W^{(S)}\right) = \int P\left(W^{(S)},W^{(C)}\right) \, \mathrm d W^{(C)}. 
\end{align}
Let us write $\sigma_C = \sum_j \sigma_j |E^{(C)}_j\rangle\langle E^{(C)}_j|$ and $\omega_\beta(H) = \sum_k w_k |E^{(S)}_k\rangle\langle E^{(S)}_k|$. 
We then get
\begin{widetext}
	\begin{align}
		P\left(W^{(S)}\right) & = \sum_{E^{(S)}_f-E^{(S)}_i=W^{(S)}} \sum_{E^{(C)}_{f'}}  \sum_{E^{(C)}_{i'}}  P\left(E^{(S)}_f, E^{(C)}_{f'} | E^{(S)}_i, E^{(C)}_{i'} \right)P(E^{(S)}_i)P(E^{(C)}_{i'})  \nonumber                                                                                                                                            \\
		                      & = \sum_{E^{(S)}_f-E^{(S)}_i=W^{(S)}} \sum_{E^{(C)}_{f'}} \sum_{E^{(C)}_{i'}} \langle E^{(S)}_f|\otimes\langle E^{(C)}_{f'}| \left(U \left(w_i \sigma_{i'} |E^{(S)}_i\rangle\langle E^{(S)}_i|\otimes|E^{(C)}_{i'}\rangle\langle E^{(C)}_{i'}|  \right) U^\dagger\right)|E^{(S)}_f \rangle \otimes |E^{(C)}_{f'}\rangle \nonumber \\
		                      & = \sum_{E^{(S)}_f-E^{(S)}_i=W^{(S)}} \langle E^{(S)}_f|\tr_C\left(U \left(w_i |E^{(S)}_i\rangle\langle E^{(S)}_i|\otimes \sigma \right) U^\dagger\right)|E^{(S)}_f\rangle                                                                                                                                                        \\
		                      & = \sum_{E^{(S)}_f-E^{(S)}_i=W^{(S)}} \langle E^{(S)}_f|\mathcal C \left(w_i |E^{(S)}_i\rangle\langle E^{(S)}_i|\right)|E^{(S)}_f\rangle.                                                                                                                                                                                         
	\end{align}
\end{widetext}
It is hence identical with the one obtained on the system alone when we think of the catalyst as a system with a trivial Hamiltonian, that is, with the distribution as defined above Eqn.~2 in the main text. 
This shows that we can always assume that the catalyst has a trivial Hamiltonian, in which case it is clear that no energy flows from the catalyst to the system, even probabilistically.
Therefore, such an energy flow is not necessary to implement catalytic transitions.

\section{Comparison with literature on generalized Jarzynski equalities in the presence of correlations}
\label{app:sagawa}

In recent years, the role of correlations, specifically quantified by the mutual information, has been well studied, in particular with respect to its influence on the Jarzynski equality~\cite{sagawa2012fluctuation,sagawa2013role}, even leading up to experimental demonstrations to test these theoretical results~\cite{toyabe2010experimental,koski2014experimental}. One may ask, how the results of this manuscript fit in the context of that line of research. This section provides a brief overview of the main differences.

In Ref.~\cite{sagawa2012fluctuation} the core observation is that the presence of initial correlations between a system S and an ancillary (catalyst) C can be used to create a thermodynamic advantage, in the sense that such processes obey a generalized JE and Second law and hence can be used to by-pass the constraints imposed by the original JE and Second law.
Specifically, \cite{sagawa2012fluctuation} derives (according to their generalized version of Jarzynski equality) a bound on the work \textit{performed} on the system that is given by
\begin{equation}
	\langle W \rangle \geq \langle\Delta F\rangle + \langle\Delta E_{\rm int}\rangle + \beta^{-1}\langle\Delta I\rangle,
\end{equation}
where $ \langle\Delta F\rangle $ is the difference between final and initial equilibrium free energy on the system, $ \langle\Delta E_{\rm int}\rangle $ for the energy difference coming from the interaction Hamiltonian between system and catalyst, and finally $ \langle\Delta I\rangle $ is the change in mutual information between system and catalyst. 
For our setup, both $ \langle\Delta F\rangle $ and $ \langle\Delta E_{\rm int} \rangle $ are zero. Given that the extracted work $ W_{\rm ext} = -W $, the above bound reduces to
\begin{equation}\label{eq:work_inequality_sagawa}
	\langle W_{\rm ext}\rangle \leq -\beta^{-1} \langle\Delta I\rangle,
\end{equation}
which says that if one allows the consumption of mutual information (leading to $ \Delta I <0 $), then it is possible to violate the average second law, namely extract some positive amount of $ W_{\rm ext} $ from a Gibbs state, for instance by reducing the entropy of the system in the process.
This particular viewpoint of correlations (information) being a thermodynamic resource is a mature and well-studied one. 

In our setting, however, the initial state of system and catalyst are always uncorrelated, which means that we always have $\langle\Delta I\rangle \geq 0$. Hence it is clear that the type of catalytic operation studied 
in Ref.~\cite{sagawa2012fluctuation} cannot correspond to our setting, since the generalized JE and Second law allow for violations of the original JE and Second law \emph{only} if $\langle\Delta I\rangle < 0$. The difference to our setting, however, is easily understood. It lies in the fact that here we allow for more general joint evolutions of the system and the catalyst. Indeed, it is easy to see that under the requirement that the initial state between catalyst and system be uncorrelated, the channels that can be implemented on the system via the operations allowed in Ref.~\cite{sagawa2012fluctuation} are unital channels, for which we show above that they cannot be used to by-pass the JE (see Fig.~\ref{fig:diagram_violation}). This is because in the above works, the catalyst is required to not evolve over time. In contrast, the notion of a $\beta$-catalytic channel allows for the evolution of the catalyst to be non-trivial, as long as the final density matrix describing the catalyst is unchanged. Since this constraint only requires the \emph{statistical} invariance of the catalyst, this allows for a much broader class of evolutions to be implemented on the system and hence explains how we can by-pass the JE and NMW in a setting where the marginal entropy of the system has to increase. In summary, the key differences to the line of work rooted in Refs.~\cite{sagawa2012fluctuation,sagawa2013role} are that we study processes that by-pass the JE by means of the \emph{creation} of correlations paired with catalysts that evolve non-trivially over time, while in the above work processes are studied that by-pass the JE by means of the \emph{absorption} of initial correlations paired with catalysts that do not evolve over time.

\section{Is it necessary to establish correlations with the catalyst?}\label{app:trumping}
In our definition of $\beta$-catalytic channels, we allow the catalyst to become correlated with the system. These correlations are certainly necessary for the correlated multi-player strategies discussed in the main text, but one might wonder whether they are also necessary to violate NMW on a single system.
To make this question concrete, consider the set of \emph{$\beta$-trumping channels}, where a quantum channel $ \mathcal{T} $ is in this set iff it has the form
\begin{equation}
	\mathcal{T} (\rho) = \tr_2 (\mathcal{N} (\rho\otimes\sigma)) , 
\end{equation}
where $ \mathcal{N} (\mathbb{I})=\mathbb{I} $ is unital and $ \mathcal{N} (\omega_\beta(H)\otimes\sigma) = \rho'\otimes\sigma$.
Note that in the case of $\beta$-catalytic channels, we restricted the corresponding channel $\mathcal N$ to be unitary. Here, we allow instead for the more general class of unital channels. 
We will prove that in the unitary case, NMW cannot be violated by $\beta$-trumping channels even though Jarzynski's equality may be violated. We will also present arguments that suggest that the same is true in the unital case. 

It is worth noting, for starters, that the fully thermalizing channel is exactly a $\beta$-trumping channel where $ \sigma = \omega_\beta(H)$, and $ \mathcal{N} $ is a unitary swap between the system and catalyst. 
Thus, even in the case of a unitary channel $\mathcal N$, such $\beta$-trumping channels can violate Jarzynski's equality. 
On the other hand, in the main text we have demonstrated that the thermalizing channel cannot violate NMW since it is Gibbs preserving. Hence, the above leaves open the question whether the NMW condition can be violated by means of $\beta$-trumping channels. However, we do not believe that this is the case, for the following reasons:

i) Our constructions of violating NMW can \emph{not} work in the trumping case. This is because in the trumping setting the so-called min-entropy $S_\infty$ (minus $\log$ of the largest eigenvalue) of the final state has to be at least as large as that of the initial state (see for example Ref.~\cite{Brandao2015}). However, in our constructions, the final min-entropy is essentially given by $-\log(p(w\geq \epsilon) \approx \log(2)$, whereas the initial min-entropy is extensive in $N$. It thus \emph{decreases} by a macroscopic amount.

ii) The previous point also suggests a route for arguing that $\beta$-trumping channels cannot be used to violate NMW in general: 
We now present an argument that rules out violations of NMW in the case of a microcanonical initial state $\Omega$ with energy density $e$, but we expect that similar statements hold true for the canonical case due to equivalence of ensembles-type of arguments.
Because of the highly peaked probability distribution of the energy density for a macroscopic, non-critical many-body system, it is easy to see that the probability $p(w\geq \epsilon)$ to extract work per particle at least $\epsilon$ is (up to arbitrarily small corrections for large $N$) given by the total probability of measuring an energy below $(e-\epsilon) N$ in the final state $\mathcal T(\Omega)$. Let us denote the projector onto these energies by $P$. We then have
\begin{align}
	p(w\geq \epsilon) \approx \Tr[P\mathcal T(\Omega)], 
\end{align}
where the approximation is arbitrarily good as $N\rightarrow \infty$. 
This insight also was an essential ingredient to the proof that Gibbs-preserving maps cannot violate NMW. 
Now, to leading order, the total number of states with energy below $(e-\epsilon)N$ is given by $\exp(s(e-\epsilon)N)$, where $s(e-\epsilon)$ is the microcanonical entropy density at energy density $e-\epsilon$. Since the total weight in this subspace is $p(w\geq \epsilon)$, the final min-entropy is upper bounded by
\begin{align}
	S_{\min}^{\mathrm{(final)}} \leq -\log(p(w\geq \epsilon)) + s(e-\epsilon) N. 
\end{align}
However, since trumping requires $S_{\min}^{\mathrm{(final)}}\geq S_{\min}^{\mathrm{(initial)}} = s(e)N$, we then find
\begin{align}
	p(w\geq \epsilon) \leq \exp(-(s(e)-s(e-\epsilon))N) \rightarrow 0, 
\end{align}
as $N\rightarrow \infty$ for any $\epsilon>0$. This shows that NMW holds for $\beta$-trumping channels in the micro-canonical case. Note that when we allow the catalyst to become correlated, NMW can be violated for microcanonical initial states, as shown above. This already makes clear that correlated catalysts provide a strict advantage in this set-up.    

iii) Finally, let us also show that if we assume that $\mathcal N$ is \emph{unitary}, as we do in the case of $\beta$-catalytic channels, then NMW cannot be violated if the catalyst remains uncorrelated. 
The reason is the following: Since the global transformation on system and catalyst is unitary, it leaves the spectrum invariant. Since the catalyst remains invariant and uncorrelated, this implies that already the spectrum of the initial density matrix on the system has to remain invariant. Therefore there exists a unitary $V$, such that $\mathcal T[\omega_\beta(H)] = V \omega_\beta(H) V^\dagger$. 
As argued in case ii), we then have
\begin{align}
	p(w\geq \epsilon) & \approx \Tr[P \mathcal T[\omega_\beta(H)]]= \Tr[P V\omega_\beta(H)V^\dagger] \\ &\leq \Tr[P\omega_\beta(H)],
\end{align}
where the last inequality follows because Gibbs states are passive states and the first approximation holds to arbitrary accuracy as $N\rightarrow \infty$.
However, by the same concentration inequalities we used to prove of our main results, we have
\begin{align}
	\Tr[P\omega_\beta(H)] \leq K \exp(- k (\epsilon^2 N)^{1/(1+D)}), 
\end{align}
for a non-critical many-body system in $D$ spatial dimensions (with constants $k,K>0$). Thus, NMW holds true in this case as well.  

\end{document}